\def\dOi{10(3:6)2014}
\subjclass{F.1.1 Models of Computation, F.1.2 Modes of Computation}
\newcommand{\cent}[0]{\mbox{\textcent}}
\theoremstyle{plain}
\theoremstyle{definition}
\begin{document}

\title[Finite state verifiers with constant randomness]{Finite state verifiers with constant randomness\rsuper*}
\author[A.~C.~C.~Say]{A.~C.~Cem Say\rsuper a}
\address{{\lsuper a}Bo\u{g}azi\c{c}i University, Department of Computer Engineering, Bebek 34342 \.{I}stanbul, Turkey}
\email{say@boun.edu.tr}
\thanks{{\lsuper a}Say was partially supported by T\"{U}B\.ITAK with grant 108E142.}

\author[A.~Yakary{\i}lmaz]{Abuzer Yakary{\i}lmaz\rsuper b}
\address{{\lsuper b}University of Latvia, Faculty of Computing, Raina bulv. 19, R\={\i}ga, LV-1586, Latvia
\hspace{100pt}
\linebreak
National Laboratory for Scientific Computing, Petr\'{o}polis, RJ, 25651-075, Brazil}
\email{abuzer@lncc.br}
\thanks{{\lsuper b}Yakary{\i}lmaz was partially supported by CAPES, T\"{U}B\.ITAK with grant 108E142, FP7 FET-Open project QCS, FP7 FET project QALGO, and ERC Advanced Grant MQC}

\keywords{interactive proof systems, 
	randomness complexity,
	constant randomness,
	probabilistic finite automata,
	multihead automata,
	NL}

\titlecomment{{\lsuper *}Some of the material in this paper was first presented in \cite{SY12}.}
    
\begin{abstract}
We give a new characterization of $\mathsf{NL}$ as the class of languages whose members have certificates that can be verified  with small error in polynomial time by finite state machines that use a constant number of random bits, as opposed to its conventional description in terms of  deterministic logarithmic-space verifiers. It turns out that allowing two-way interaction with the prover does not change the class of verifiable languages, and that no polynomially bounded amount of randomness is useful for constant-memory computers when used as language recognizers, or public-coin verifiers. A corollary of our main result is that the class of outcome problems corresponding to $O(\log n)$-space bounded games of incomplete information where the universal player is allowed a constant number of moves equals $\mathsf{NL}$.
\end{abstract}

\maketitle

\section{Introduction} \label{section:Introduction}

It is known that allowing constant-memory computers to use random bits and to commit small amounts of error increases their power, both as language recognizers \cite{Fr81}, and as verifiers of membership proofs \cite{CL89,DS92}. In this paper, we examine the effects of restricting such probabilistic machines (2pfa's) to use only a constant number of random bits, independent of the length of the input. We prove that such constant-randomness 2pfa's are able to verify membership in precisely the languages in $\mathsf{NL}$. This is an interesting addition to  the facts that $\mathsf{NL}$ has deterministic logspace verifiers, and $\mathsf{NP}$ is the class of languages that has  logspace verifiers that use logarithmically many random bits \cite{CL95}. We obtain this result by demonstrating that such verifiers are equivalent to multihead finite automata. Allowing these constant-coin verifiers to use logarithmic space, and to have two-way interaction with the prover, does not augment the class of verifiable languages. No nonregular language has such an interactive proof system if the verifier is restricted to use public coins. We also  show that, when used as recognizers, no amount of polynomially-bounded randomness gives standard 2pfa's any power beyond their deterministic versions.

The rest of this paper is structured as follows: Section \ref{section:Preliminaries} provides the necessary background. Our results on the new characterization of $\mathsf{NL}$ in terms of finite state verifiers, and the public-coin case, are presented in Section \ref{section:main}.    Several variants of the verifier model are examined in Sections \ref{section:rheads} and \ref{section:2pfak}. A characterization of the class of outcome problems corresponding to $O(\log n)$-space bounded games of incomplete information where the universal player is allowed a constant number of moves is given in Section \ref{section:PAFA}. Section \ref{section:Conclusion} is a conclusion.
\section{Preliminaries} \label{section:Preliminaries}
For background on interactive proof systems with bounds on the usage of space and/or randomness, the reader is referred to \cite{Co93A}. 

The main model of verifier that we will use is a probabilistic Turing machine (PTM) with a read-only input tape and a single read/write work tape. The input tape holds the input string between two occurrences of the end-marker symbol $\cent$, and we assume that the machine's transition function never attempts to move the input head beyond the end-markers. The input tape head is on the left end-marker at the start of the process. The verifier exchanges information with a prover by writing and reading one symbol at a time from the communication alphabet $\Gamma$ in a communication cell. Using this information channel, the prover attempts to prove the membership of the input string in the language under consideration. Of course, one should not trust this blindly, and we even allow the possibility that the prover sends an infinite sequence of symbols, a contingency that could cause careless verifiers to run forever. The machine also has access to a source of random bits. The state set of the verifier PTM is $Q = R \cup D \cup \{q_{a},q_{r}\}$, where $R$ is the set of coin-tossing states, $D$ is the set of deterministic states, and $q_{a}$ (accept) and $q_{r}$ (reject) are the halting states.  One of the non-halting states is designated as the start state. A \textit{configuration} of the verifier is defined to be the 4-tuple consisting of its current internal state, input head position, work tape content, and work tape head position. Associated with each state $q \in Q$, there is a communication symbol $\gamma_q \in\Gamma$. The special ``null symbol'' $\epsilon$ is guaranteed to be a member of $\Gamma$. A ``communication step" starts when any state $q \in R \cup D$ with $\gamma_q \neq \epsilon$ is entered, with $\gamma_q$ being written in the communication cell. The prover can be modeled as a prover transition function $\rho$, which determines the symbol $\gamma \in \Gamma$  to be written in response, based on the input string and the entire communication that has taken place so far.\footnote{This ensures that the prover is not able to detect how many moves have been executed by the verifier up to the present communication step.}  Let $\lozenge=\{-1,0,+1\}$ denote the set of possible head movement directions. When the verifier reads the response of the prover, it behaves according to the  verifier transition function $\delta$ as follows: For $q \in R$, $\delta(q,\sigma,\theta,\gamma,b)=(q',\theta',d_{i},d_{w})$ indicates that the machine will switch to state $q'$, write $\theta'$ on the work tape, move the input head in direction $d_{i} \in \lozenge$, and the work tape head in direction $d_{w} \in \lozenge$, if it is originally in state $q$, scanning the symbols $\sigma$,  $\theta$, and $\gamma$ in the input and work tapes, and the communication cell, respectively, and seeing the random bit $b$ as a result of the coin toss. For $q \in D$, $\delta(q,\sigma,\theta,\gamma)=(q',\theta',d_{i},d_{w})$ has a similar meaning, but without the randomness. If $\gamma_q = \epsilon$, the verifier transition function described above is applied directly, without any communication.

There are two different definitions of interactive proofs for language membership. We start with the ``strong" definition.

We say that language $L$ \textit{has a (private-coin) interactive proof system (IPS) with error probability} $\varepsilon$ if there exists a prover $P$ and a verifier $V$ such that
\begin{enumerate}
	\item [1.] for every $x \in L$, the interaction of $P$ and $V$ on input $w$ results in acceptance with probability at least $1-\varepsilon$, and,
	\item [2.] for every $x \notin L$, and for any prover $P^*$, the interaction of $P^*$ and $V$ on input $x$ results in rejection with probability at least $1-\varepsilon$.
\end{enumerate}

\noindent Interactive proof systems where the verifier accepts every member of the language with probability $1$ are said to have \textit{perfect completeness}.

$\mathsf{IP}$ is the class of languages that have  interactive proof systems with polynomial-time verifiers, and with error probability $\varepsilon$ for some $\varepsilon < \frac{1}{2}$. $\mathsf{IP}(\langle$restriction 1$\rangle,\cdots,\langle$restriction $k \rangle)$ will denote the class of languages that have IPSs with $\langle$restriction 1$\rangle,\cdots,\langle$restriction $k \rangle$ \cite{Co93A,CL95}.  We will be examining restrictions on expected runtime, worst-case space (i.e. work tape cells), and random bits. We use the notations $cons$, $log$,  $poly$, and $exp$ to stand for functions that are $O(1)$, $O(\log n)$, $O(n^{c})$, and $2^{O(n^{c})}$ for any constant $c$, respectively. For instance, $\mathsf{IP}=\mathsf{IP}($poly-time). 

We will also be considering the effects of restricting tape head movement on our models. In the general case, both the input and the work tape heads are allowed to move in both directions (except when the input head is on an end-marker) or to stay put, as represented by the set $\lozenge=\{-1,0,+1\}$ in the definition above. A machine where a particular head is not allowed to move left is said to have \textit{one-way} access to the corresponding tape. Heads which are restricted even further so that they are not allowed to stay put, and must move right at  every step, are called \textit{real-time}. These features will be represented by the notations $ \mathsf{1way\mbox{-}input} $ and $ \mathsf{rt\mbox{-}input} $, respectively, in the restriction lists in class names.

Replacing condition 2 in the definition above with the weaker condition 
\begin{enumerate}
	\item [$2'$.]for every $x \notin L$, and for any prover $P^*$, the interaction of $P^*$ and $V$ on input $x$ results in acceptance with probability at most $\varepsilon$
\end{enumerate}
leads to our definition of the $\mathsf{IP}_w\mathsf{(}$restriction-list) classes, the counterparts of $\mathsf{IP(}$restriction-list) with these alternative kinds of verifiers that do not have to halt with high probability for all inputs. Note that, since these ``weak" IPS's are less constrained than the ``strong" ones of the previous definition, the $\mathsf{IP}$ classes are always contained in the corresponding $\mathsf{IP}_w$ classes.  

A \textit{one-way interactive proof system} \cite{Co93B} is an IPS where the prover is restricted so that it maps the set of input strings to the set of sequences from the communication alphabet.\footnote{Note the terminological accident here. We have already used the word ``one-way" to describe tape heads which can not move to the left. Unfortunately, IPSs where the prover does all the talking also happen to be called with this name in the literature. We will be clear about which feature we are referring to throughout the paper.} For input string $w$, the prover writes the $i$th symbol of the corresponding sequence in the communication cell at the $i$th time  the verifier enters a state $q$ with $\gamma_q \neq \epsilon$. This ensures that the communication between the prover and the verifier is one-way. The corresponding language classes are named by prefixing the class names mentioned above with the designation ``oneway-". Note that a one-way IPS can be modeled as a verifier which has one-way access to an additional ``certificate" tape, on which a purported membership proof of the input string has been written, without the need to mention a prover or a communication cell at all, as in the definitions of conventional nondeterministic classes.

The following equalities are trivial: 
\begin{equation}
       \label{equation:detpolyspace}
     \mathsf{NP}=\mathsf{oneway\mbox{-}IP}\mbox{(poly-time, poly-space, 0-random-bits)}
\end{equation}
\begin{equation}
       \label{equation:detlogspace}
      \mathsf{NL}=\mathsf{oneway\mbox{-}IP}\mbox{(poly-time, log-space, 0-random-bits)}
\end{equation}
Note that specifying 0 as the randomness complexity of the verifier is just a way of saying that it is deterministic.

Allowing logarithmic amounts of randomness yields the characterization \cite{CL95}
\begin{equation}
       \label{equation:randlogspace}
       \begin{array}{rcl}
    	   	\mathsf{NP} & = & \mathsf{oneway\mbox{-}IP}\mbox{(poly-time, log-space, log-random-bits)}
    	   	\\
	       	& = & \hspace*{37pt} \mathsf{IP}\mbox{(poly-time, log-space, log-random-bits)}
       \end{array}       
\end{equation}
with an improvement in the space bound. Relaxing the randomness bound of the one-way IPS further does not help on its own, since \cite{Co93B} 
\begin{equation}
       \label{equation:co93}
       \mathsf{oneway\mbox{-}IP}\mbox{(poly-time, log-space)}=\mathsf{NP},
\end{equation}
but allowing interaction as well famously yields \cite{Co91,Sh92}
\begin{equation}
       \label{equation:pspace}
     \mathsf{IP}\mbox{(poly-time, log-space)}=\mathsf{PSPACE}.
\end{equation}

A \textit{public-coin IPS}, also known as an \textit{Arthur-Merlin game}, is an IPS where the coins of the verifier can be seen by the prover when they are flipped, thereby ensuring that the prover always knows the verifier's configuration during the communication. The public-coin version of  $\mathsf{IP}(\langle$restriction 1$\rangle,\cdots,\langle$restriction $k \rangle)$ will be named $\mathsf{AM}(\langle$restriction 1$\rangle,\cdots,\langle$restriction  $k \rangle)$, and the notation will also be extended to the weak definition in a similar way. It is known \cite{Co89,GS89,Sh92} that
\begin{equation}
       \label{equation:amp}
       \mathsf{AM}\mbox{(exp-time, log-space)}=\mathsf{P},
\end{equation}
and
\begin{equation}
       \label{equation:amip}
     \mathsf{AM}\mbox{(poly-time, poly-space)}=\mathsf{PSPACE}.
\end{equation}

The relationships in Equations \ref{equation:detpolyspace}-\ref{equation:amip} remain true for the weak definition of IPS's, since logarithmically bounded space is sufficient to cut off unacceptably long computational paths. When one considers finite state verifiers, \cite{DS92}
which use only a constant amount of cells on the work tape,\footnote{It is easy to see that such machines can be simulated by machines with longer programs which have no work tape at all, namely, two-way probabilistic finite automata (2pfa's) \cite{Fr81}.} the difference between the weak and strong definitions becomes evident. 

With no limits on the runtime, or the number of random bits to be used, weak IPS's with finite state verifiers exist for a vast class of languages; $\mathsf{oneway\mbox{-}IP}_w\mbox{(cons-space)}$ contains every recursively enumerable language, whereas $\mathsf{IP}\mbox{(cons-space)}$ is contained in $\mathsf{SPACE(2^{2^{O(n)}})}$ \cite{CL89}. It has been proven \cite{DS92} that Arthur-Merlin games with finite state verifiers exist for languages outside  the class of languages recognizable by ``stand-alone'' 2pfa's, and that some languages have linear-time finite state verifiers only if the public-coin restriction is not enforced, in contrast to Equations \ref{equation:pspace} and \ref{equation:amip}.

We will focus on verifiers which use a constant number of random bits for any input.

In the next section, we will demonstrate an interesting relationship between constant-space, constant-randomness  verifiers and multihead finite automata. A $k$-head finite automaton (2nfa($k$)) is simply a nondeterministic finite-state machine with $k$  two-way heads that it can direct on a read-only tape containing the input string, flanked by two end-markers. A configuration of a 2nfa($k$) is a tuple consisting of its current state and head positions. Deterministic multihead finite automata (2dfa($k$)'s) are defined analogously. The classes of languages recognized by these machine families will be denoted as $\mathsf{2NFA}$($k$) and $\mathsf{2DFA}$($k$), respectively.  We will also look at probabilistic    versions of multihead automata (2pfa($k$)'s). Detailed information about these machines can be found in \cite{HKM11,Ma97}. 
 We note the following important facts that will be used in our proofs.

\begin{fact}
	\label{fact:mhfa-nl}  $\bigcup_{k\geq 1}\mathsf{2NFA}(k)=\mathsf{NL}$ \cite{Ha72}.
\end{fact}
\begin{fact}
	\label{fact:2dfak-ll}  $\bigcup_{k\geq 1}\mathsf{2DFA}(k)=\mathsf{L}$ \cite{Ha72}.
\end{fact}
\begin{fact}
	\label{fact:2k} Every 2nfa($k$) (resp., 2dfa($k$)) has an equivalent
2nfa($2k$) (resp., 2dfa($2k$)) that halts  in $O(n^k)$ time  on every computational branch.\footnote{We thank Martin Kutrib, who taught us the proof of this fact.}
\end{fact}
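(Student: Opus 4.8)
The plan is to exploit the fact that a \nfa($k$) running on an input of length $n$ has only $O(n^k)$ distinct configurations: if $Q$ is its state set and the input is flanked by the two end-markers, each of the $k$ heads occupies one of $n+2$ cells, so the number of configurations is exactly $|Q|\,(n+2)^k$. Consequently, whenever an accepting computation exists at all, there is one that never revisits a configuration, and such a computation has length at most $|Q|\,(n+2)^k = O(n^k)$. The idea, then, is to build an equivalent machine that simulates the given one step for step, but aborts and rejects any branch that runs longer than this bound; cutting branches off at $O(n^k)$ steps discards only computations that loop, and hence preserves the accepted language.

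First I would use the $k$ extra heads available to the \nfa($2k$) as a clock. Reading the positions of these $k$ auxiliary heads as the digits of a base-$(n+2)$ counter lets the machine count through all $(n+2)^k$ head-position tuples; combining this with an ordinary modulo-$|Q|$ counter kept in the finite control gives a clock of capacity exactly $|Q|\,(n+2)^k$, matching the configuration count. The new machine interleaves one simulated step of the original $k$-head machine with one increment of the clock, and it halts and rejects the moment the clock overflows. Every branch therefore halts: either the simulated machine reaches a halting state within the allotted number of steps, or the clock expires first. In the nondeterministic case the shortening argument above guarantees that some branch still accepts exactly when the original machine accepts; in the deterministic case the unique computation either halts before the clock runs out or is trapped in a loop, so rejecting on overflow again preserves the language, and the whole construction stays deterministic, yielding a \dfa($2k$).

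The step that needs the most care is the running time, since incrementing the multi-head odometer requires resetting lower-order heads when a carry propagates, and walking a head back to the left end-marker costs $\Theta(n)$ moves. The key point is that these resets submit to an amortized argument: the $i$th auxiliary head (counting from the least significant) is reset only $(n+2)^{k-i}$ times over the entire run, so its total reset cost is $O\bigl((n+2)^{k-i}\cdot n\bigr)=O(n^{k-i+1})$, and summing over $i$ from $1$ to $k$ gives $O(n^k)$ for all resets combined. The plain advance moves together with the $O(1)$ bookkeeping per simulated step contribute another $O(n^k)$, since the clock caps the number of iterations at $|Q|\,(n+2)^k=O(n^k)$. (Alternatively, a reflected back-and-forth enumeration of the tuples avoids resets entirely and moves exactly one head per increment.) Hence the equivalent \nfa($2k$), respectively \dfa($2k$), halts within $O(n^k)$ steps on every branch, as required.
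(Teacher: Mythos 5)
Your proof is correct, but there is nothing in the paper to compare it against: the paper states this as a bare Fact, with no citation and no argument (the closest it comes is the passing observation, inside the proof of Lemma \ref{lemma:mhfatoip}, that a 2nfa($k$) with state set $Q$ has at most $|Q|(n+2)^{k}$ reachable configurations). Your write-up therefore supplies the folklore construction that the Fact rests on, and it does so soundly. Both halves of your argument check out: loop excision shows that truncating every branch after $|Q|(n+2)^{k}$ steps preserves the accepted language in the nondeterministic case, and in the deterministic case a computation exceeding the configuration count must have repeated a configuration and hence loops forever, so rejecting on overflow is correct and the construction stays deterministic. The timing analysis of the base-$(n+2)$ odometer on the $k$ spare heads is also right: the head serving as digit $i$ is reset only $(n+2)^{k-i}$ times at cost $O(n)$ each, so all carry propagation costs $O(n^{k})$ in total, while the per-step $O(1)$ bookkeeping is capped at $O(n^{k})$ by the clock itself; the Gray-code (reflected enumeration) variant you mention would indeed avoid the amortization altogether. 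One could even note that a carry cascade can be executed in a single sweep since a multihead automaton moves all heads simultaneously, but your sequential accounting already gives the required $O(n^{k})$ bound.
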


Multihead finite automata where all heads are restricted to one-way movement (\textit{one-way $k$-head  automata}) will be denoted 1nfa($k$)'s. The corresponding language classes are named $\mathsf{1NFA}$($k$). The probabilistic and deterministic versions of these machines will be denoted 1pfa($k$) and 1dfa($k$), respectively.


\section{2pfa verifiers with constant randomness and 2nfa($k$)'s} 
\label{section:main}
We start our examination of the effects of limiting the number of random bits by noting that machines that are not helped by a prover about their input are very weak when restricted to work with constant workspace, and polynomially bounded randomness.

\begin{thm}\label{theorem:polyrand}
For any polynomial $p$, every 2pfa whose expected number of coin tosses on halting computational branches is $O(p(n))$ for input strings of length $n$ recognizes a regular language with bounded error.
\end{thm}
\begin{proof}
This is a straightforward modification of the proof (in \cite{DS90})
of the following fact \cite{DS90,KF91}:
\begin{center}
\begin{minipage}{0.95\textwidth}
       \textsl{For any polynomial $p$, 2pfa's with expected runtime $O(p(n))$
       recognize  only the regular languages with bounded error.}
\end{minipage}
\end{center}
See  Appendix \ref{appendix:one} for the details.
\end{proof}

Our new characterization of $\mathsf{NL}$ is demonstrated by the following  lemmas.

\begin{lem} \label{lemma:mhfatoip}
For any language $L$ in $\mathsf{NL}$, there exists a weak one-way, constant-space, constant-randomness IPS that recognizes $L$ with perfect completeness for any desired error probability $\varepsilon<\frac{1}{2}$.
\end{lem}

\begin{proof}
By Fact \ref{fact:mhfa-nl}, $L$ is recognized by a 2nfa($k$) $M$. We show how to construct an IPS with the required properties. As mentioned above, this is equivalent to demonstrating how every member of $L$ has a membership certificate that can be checked with such a verifier. We start by building a verifier $V$ that simulates one run of $M$, by consulting the certificate for choosing among the nondeterministic branches of $M$. $V$ uses just $r = \lceil\log k \rceil$ random bits to branch to $k$ computation paths (each path has probability at least $2^{-r}$) while scanning the left input end-marker. Each such path will use its head to track the position of the corresponding head of $M$. For every step of the simulation of $M$, the certificate contains a symbol conveying the list of $k$ symbols that would be scanned by $M$'s heads at this step, together with an indication of which nondeterministic choice should now be taken by $M$ to eventually reach the accept state. The $i$th path of $V$ rejects immediately if it sees that the present certificate symbol is inconsistent with what the $i$th head is currently scanning, and updates its state and head position according to $M$'s program and the information given by the certificate otherwise.

If the input string is accepted by $M$, the certificate will lead all paths of $V$ to acceptance, by giving correct information about what the heads are seeing and the nondeterministic choice at every step, yielding a total acceptance probability of $1$. Otherwise, any certificate must ``lie'' about at least one head in order to make some paths  accept, causing the path responsible for that head to reject, so the acceptance probability in that case is at most $1-2^{-r}$. To reduce the unacceptably high error bound for nonmembers, we chain several copies of $V$ to run one after another,\footnote{Note that the certificate guides the paths of $V$ to position their heads back on the left end-marker and to start the next round of coin-flipping simultaneously.} on a correspondingly long certificate, and accept if and only if all copies accept, rejecting otherwise. It is easy to see that a chain of $m$ copies of $V$ involves an error of $ (1-2^{-r})^{m} $, and therefore $m \ge \frac{\log\varepsilon}{\log(1-2^{-r})}$ iterations are sufficient to obtain an error of $\varepsilon$, where the total number of random bits used by the resulting verifier would be $O( k \log k \log\frac{1}{\varepsilon})$. Note that a 2nfa($k$) with state set $Q$ has at most $|Q|(n+2)^{k}$ distinct reachable configurations on any input of length $n$, and therefore $V$ runs in polynomial time for correct proofs of membership.
 \end{proof}

This result enables us to determine the minimum number of ``useful" random bits required by 2pfa verifiers: A single coin toss would create just two computational paths with equal probability. Since a probabilistic machine that always responds correctly can be replaced by its deterministic counterpart, we must have the verifier err for at least one input string to have any hope of outperforming a two-way deterministic finite automaton. But the probability of such an error is at least $\frac{1}{2}$ in a  machine that tosses its coin only once, which would violate our bounded error condition. Additional random bits can be used to reduce the error probability as described in the proof of Lemma \ref{lemma:mhfatoip}, and $\mathsf{2NFA}(2)$, which contains nonregular languages, has verifiers with two random bits.

The reader should also note that the IPSs of Lemma \ref{lemma:mhfatoip} are strictly more powerful than  2pfa's  unaided by a prover, even when the latter are allowed to use an unbounded number of fair coins, since it is known \cite{Ka89,Ma98} that the class of languages recognizable by such stand-alone 2pfa's is properly contained in the class $\mathsf{L}$.

 The reason why the construction in Lemma \ref{lemma:mhfatoip} does not yield an IPS according to the strong definition is that an evil prover can supply an infinitely long fake certificate that makes some paths of the verifier enter infinite loops by lying\footnote{We can assume that the simulated multihead automaton has the  desirable property mentioned in Fact \ref{fact:2k}. Any prover that causes a long runtime must therefore be lying.} about a head that those paths cannot see, at the cost of being rejected by the path responsible for that head. If we forgo the guarantee of halting with probability 1 for members of the language, (thereby losing perfect completeness,) and the capability of reducing the error bound to any desired nonzero value, settling for an $\varepsilon$ that is near (but of course strictly less than) $\frac{1}{2}$, we can create a strong one-way IPS for any language in $\mathsf{NL}$, as the next lemma shows.
 
 \begin{lem} \label{lemma:strondef} $\mathsf{NL}  \subseteq  \mathsf{oneway\mbox{-}IP}\mathrm{(cons\mbox{-}space, cons\mbox{-}random\mbox{-}bits)}$.
\end{lem}

\begin{proof}
Let $L$ be any language in $\mathsf{NL}$. We first use the construction in the beginning of the proof of   Lemma \ref{lemma:mhfatoip} to build a verifier $V$ that uses $r= \lceil\log k \rceil$ random bits to simulate one run of the  2nfa($k$) associated with $L$, accepts correct certificates  for members of $L$ with probability 1, and rejects any incorrect certificate with probability at least $2^{-r}$. We then augment $V$ to obtain a new verifier $V'$, which uses $r+1$ more random bits, as follows: $V'$ rejects directly with probability $\frac{2^{r}-1}{2^{r+1}}$. With the remaining probability, $V'$ transfers control to $V$.

$V'$ accepts correct membership certificates with probability $\frac{2^{r}+1}{2^{r+1}}$, i.e. with an error of $\frac{2^{r}-1}{2^{r+1}}$. Any incorrect certificate is rejected with probability at least $\frac{2^{2r}+1}{2^{2r+1}}$, yielding a strong IPS with error bound $\frac{2^{2r}-1}{2^{2r+1}}$. Note that an honest prover can always supply a certificate that causes $V'$ to halt with probability 1 for members of $L$ within polynomial time; and $V'$ can be tricked to running forever by evil provers only with probability at most $\frac{2^{2r}-1}{2^{2r+1}}$.
\end{proof}

We will now show that two-way interaction with the prover does not augment the power of constant-randomness verifiers, even if they are allowed to use logarithmic space, and no requirement of halting with probability 1, let alone a time bound, is imposed on computations for inputs in the language.
\begin{lem} 
	\label{lemma:iptomhfa} $\mathsf{IP}_w\mathrm{(log\mbox{-}space, cons\mbox{-}random\mbox{-}bits)} \subseteq \mathsf{NL}$. 
\end{lem}
\begin{proof}
We start by showing that any language in $\mathsf{IP}_w\mathrm{(log\mbox{-}space, cons\mbox{-}random\mbox{-}bits)}$ has an IPS with a worst-case polynomial  bound on the runtime of the verifier.

Suppose that a language $L$ has a weak IPS with error $\varepsilon$ consisting of prover $P$ and logspace verifier $ V $, which  always uses at most $r$ random bits. Assume without loss of generality that $V$ tosses all of its coins at the start, and then transfers control to the appropriate member of S=$\{M_1,M_2,\ldots,M_{2^{r}}\}$, where each $M_i$ is a deterministic logspace verifier corresponding to the $i$th possible assignment to the $r$-bit random string. The prover $P$ can be viewed as communicating with these deterministic verifiers, and eventually convincing more than half of them to accept the input strings in $L$. Note that the number of distinct reachable configurations  of any of the $M_i$ is bounded by a polynomial, say, $c(n)$, in the input length $n$, and these machines can therefore run for at most $c(n)$ steps between any two consecutive communication steps.

Some members of S can have the same communication transcript, that is, they can send precisely the same sequence of symbols to $P$, and therefore receive the same sequence of responses. Since this is a private-coin system, $P$ does not know which particular $M_i$ it is talking to in such cases. From the point of view of $P$, the state of $V$ at any communication step is a probabilistic mixture (an ``ensemble") of the configurations of the deterministic verifiers consistent with the interchange so far. Since such an ensemble can contain at most $2^r$ elements, the total number of possible ensembles is itself bounded by a polynomial, say, $p(n)$, in $n$. We therefore conclude that $P$ does not need to communicate more than $p(n)$ symbols to convince $V$ for any input string in $L$, since a longer communication would necessarily repeat an ensemble and can be shortened without changing the result. It follows that all accepting branches of $V$ have polynomially bounded runtime for all members of $L$ when communicating with such a $P$.  One can, if one wishes, then build a new logspace verifier that simulates $V$, rejecting when the execution of any $M_i$ has exceeded this time bound, to obtain a new IPS handling the language $L$ with the same error and randomness cost.

Let us now proceed to show that $L \in \mathsf{NL}$, by building a   one-way IPS with a deterministic logspace verifier $ M $ for $L$ (recalling Equation \ref{equation:detlogspace}). We will use the probabilistic verifier $V$ described above in our construction. Let us say that $V$ and $P$ use the communication alphabet $\Gamma$, and that $V$ has state set $Q=C\cup N$, where $C=\{q \in Q \mid \gamma_q\neq \epsilon \}$ is the set of states which communicate with the prover, and $N$ is the set of ``noncommunicating" states. Recall that $V$ starts by randomly picking a member of the set S of $2^{r}$  deterministic verifiers, the $M_i$. 

The purported membership certificate that our new verifier $M$ will check consists of $2^r$ tracks, each with alphabet $T = \Gamma\cup \{\bigcirc,\infty\}$. The $i$th track is supposed to contain a transcript of $M_i$'s communications with the prover about  the  input $x$. The $i$th track square of the $j$th certificate symbol contains 
\begin{itemize}
\item $\gamma$, if $M_i$ receives the prover response $\gamma$ in its $j$th communication step,
\item $\bigcirc$, if $M_i$ performs a halting computation with fewer than $j$ communications, and,
\item $\infty$ otherwise, that is, if $M_i$ enters a nonhalting path of noncommunicating states after performing fewer than $j$ communications.
\end{itemize}\smallskip

\noindent To process the $j$th certificate symbol, $M$ simulates all the $M_i$'s that are indicated to be on a halting path on the input $x$ until they reach their $j$th communication step,  terminate, or are detected to have entered an infinite loop by running more than $c(n)$ steps. $M$ rejects if  it detects a mismatch between the track content and the actual computation of $M_i$.

Recall that some members of S can have the same communication transcript, and are therefore indistinguishable by the prover, for the input at hand. Partition S into blocks, each of which correspond to a different  communication transcript. 
$M$ discovers this partition as it goes through the certificate. At the start, it considers all the $M_i$'s as in the same block in the initial partition. Whenever it scans a new certificate symbol,  $M$ refines the partition to separate the $M_i$'s that send different symbols, or perform no communication, and rejects if the certificate is claiming that different prover messages are being received by two verifiers in the same block of the new partition. If any track contains a communication symbol after the appearance of a $\bigcirc$ or an $\infty$, $M$ rejects. If it detects that the certificate is longer than $p(n)$ steps, $M$ rejects. $M$ accepts if the certificate survives these tests, and a majority of the $M_i$'s are verified to terminate with acceptance.

Clearly, a majority of the members of S accept as a result of their interaction with $P$ on the input $x$ if and only if $x\in L$. If the input is not in $L$, there is no prover that can fool  $V$ for more than half of its possible coin strings to cause acceptance together, and no certificate can make $ M $ accept  this input.
 \end{proof}

We have proven that
\begin{thm} \label{theorem:main} 
\[
	\mathsf{oneway\mbox{-}IP}\mathrm{(cons\mbox{-}space, cons\mbox{-}random\mbox{-}bits)}= \mathsf{NL}=\mathsf{IP}_w\mathrm{(log\mbox{-}space, cons\mbox{-}random\mbox{-}bits)}.
	\]
\end{thm}\smallskip

\noindent Let $\mathsf{REG}$ denote the set of regular languages. We also have the following to say about the public-coin versions of these verifiers.

\begin{fact}\label{fact:AMwconscons} $\mathsf{AM}_w\mathrm{(cons\mbox{-}space, cons\mbox{-}random\mbox{-}bits)} = \mathsf{REG}$.
\end{fact}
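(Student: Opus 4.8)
The plan is to establish two inclusions. The inclusion $\mathsf{REG} \subseteq \mathsf{AM}_w\mathrm{(cons\mbox{-}space, cons\mbox{-}random\mbox{-}bits)}$ is trivial: every regular language is recognized by a deterministic finite automaton, which is a degenerate verifier using constant space, zero random bits, and ignoring the prover entirely. Since a deterministic recognizer is a public-coin verifier that happens to flip no coins and admits no useful interaction, this direction requires essentially no work beyond observing that $0$-random-bit verifiers subsume DFAs.

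The substantive direction is $\mathsf{AM}_w\mathrm{(cons\mbox{-}space, cons\mbox{-}random\mbox{-}bits)} \subseteq \mathsf{REG}$, and here I would exploit the defining feature of public coins: because the prover sees every coin outcome as it is flipped, and the verifier uses only a constant number $r$ of random bits total, the prover knows the verifier's entire configuration throughout the interaction. First I would argue that, in the public-coin setting, an optimal prover strategy can be fixed in advance as a function of the coin string. Since there are only $2^{r} = O(1)$ possible coin strings, the verifier's behavior decomposes into constantly many deterministic interactions, one per coin outcome. For each fixed coin string, the verifier becomes a deterministic finite-state machine interacting with a prover who already knows exactly which machine it is facing; such a one-prover, one-deterministic-verifier interaction against a constant-space verifier recognizes only a regular language, because the prover's best responses depend only on the verifier's finite configuration and can be folded into the transition structure of a (possibly larger) finite automaton.

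The key step, which I expect to be the main obstacle, is handling the weak definition carefully: the verifier need not halt with high probability on members of $L$, so I cannot simply invoke a runtime bound. I would address this by the same ensemble/configuration-repetition idea used in the proof of Lemma \ref{lemma:iptomhfa}, but sharpened by the public-coin assumption. Because the prover knows the coin outcomes, there is no ensemble collapse to track; instead, for each fixed coin string the relevant state is just the verifier's own configuration together with the communication history, and the prover's optimal move is a function of the current verifier configuration alone. Since the verifier has only finitely many configurations (constant space), the pair (verifier configuration, chosen coin string) ranges over a finite set, so the combined system is a finite-state device. Any computation that revisits such a combined configuration either loops (contributing nothing new to the acceptance probability) or can be short-circuited, so the accept/reject decision for each coin string is determined by a finite automaton. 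Taking the weighted majority over the $2^{r}$ coin strings and comparing against the bounded-error threshold then yields a decision that itself depends only on finitely much information about the input, placing $L$ in $\mathsf{REG}$.

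Finally I would assemble these observations: combining the trivial inclusion with the argument that public coins plus constant randomness plus constant space forces regularity gives the claimed equality. The contrast with Lemma \ref{lemma:mhfatoip} is worth flagging in the writeup, since it is precisely the prover's knowledge of the coin outcomes that prevents the verifier from leveraging its heads to track more than a regular amount of information, explaining why the nonregular languages of $\mathsf{2NFA}(2)$ are no longer reachable once the coins are made public.
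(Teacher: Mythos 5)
Your argument is correct and follows essentially the same route as the paper: the paper obtains this Fact as the $s(n)=O(1)$ case of Theorem~\ref{theorem:AMwgen}, whose proof likewise hardwires the $2^{r}$ coin strings into deterministic verifiers, observes that public coins let the prover address each of them independently (so, unlike in Lemma~\ref{lemma:iptomhfa}, no cross-transcript consistency check is needed), bounds transcript lengths by configuration counting to handle the weak definition, and takes a majority vote. The only cosmetic difference is that the paper packages the $2^{r}$ per-coin-string games into a single nondeterministic constant-space (hence regular) machine that checks the transcripts sequentially, whereas you conclude by closing $\mathsf{REG}$ under the Boolean majority combination of the individual regular languages; for constant space these are interchangeable.
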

This is a special case of the following theorem.
\begin{thm}\label{theorem:AMwgen} For any resource bounds $r(n)$ and $s(n)$, where $r(n)$ is computable in space $s(n)$, and $r(n)\in O(s(n))$, $\mathsf{AM}_w\mathrm{(}s\mathrm{\mbox{-}space,\mbox{ }}r\mathrm{\mbox{-}random\mbox{-}bits)} = \mathsf{NSPACE}($s$\mathsf{)}$.
\end{thm}
\begin{proof}
One direction is obvious. For the other direction, we adapt the proof of Lemma \ref{lemma:iptomhfa}. Let $V$ be a public-coin verifier utilizing $r(n)$ coins and $s(n)$ space for a language $L$. We build a deterministic verifier $M$. Let the  set S consist of the $2^{r(n)}$ deterministic verifiers (the $M_i$'s in the terminology of Lemma \ref{lemma:iptomhfa}) obtained by hardwiring all possible coin sequences to $V$. Since the prover is now free to send different messages to each of these verifiers, we do not have to worry about checking for consistency among the supplied communication transcripts of those machines. $M$ can therefore simulate them sequentially, rather than in parallel, requiring its certificate to just present the transcripts of the communication between each $M_i$ and the prover one after another. This certificate can be controlled  in $s(n)$ space.
\end{proof}

\section{Restrictions on heads} \label{section:rheads}
%

\subsection{One-way verifiers} \label{subsection:1way}

In this section, we will show that a relationship similar to the one established in Section \ref{section:main} exists between verifiers that are further restricted to perform one-way access to their input string, and the  family of one-way multihead machines, the 1nfa($k$)'s.

\begin{thm}\label{theorem:oneway}
\[ 
	\begin{array}{l}
		\mathsf{oneway\mbox{-}IP}\mathrm{(cons\mbox{-}space, cons\mbox{-}random\mbox{-}bits,1way\mbox{-}input)} =
		\\
		\hspace*{38pt}
		\mathsf{IP}\mathrm{(cons\mbox{-}space, cons\mbox{-}random\mbox{-}bits,1way\mbox{-}input)}
		= \bigcup_{k\geq 1}\mathsf{1NFA}(k).
	\end{array} 
\]
\end{thm}
\begin{proof}
One of the nontrivial inclusions is easy to prove: Replace the 2nfa($k$) mentioned in the proof of Lemma \ref{lemma:strondef} with any 1nfa($k$), and the construction there yields an equivalent one-way IPS with a constant-space, constant-randomness verifier that has a one-way input head.

For the remaining inclusion, suppose that we are given an IPS with a verifier $ V $,  which always uses at most $r$ random bits, 		and a one-way input head, for a language $L$. We start by transforming $ V $
 to a set S of $2^{r}$ 1dfa verifiers, each of which simulates a version of $ V $ with a different assignment to the $r$-bit random string.

We build a 1nfa($2^{r}$) $ M $ to recognize $L$. As in the proof of Lemma \ref{lemma:iptomhfa}, $ M $ guesses a certificate, and simulates $ V $ to see if this certificate describes a correct transcription of a dialogue of $V$ with the prover that ends with the input string being accepted with high probability. $M$ uses a different head for representing the head position of each machine in set S.  For each newly guessed certificate symbol $\gamma$, $M$ goes through all the machines in S. Each such machine $A$ can either spend a finite number of steps without communicating with the prover, or enter an infinite loop with no further communication. The number of distinct configurations of $A$ in this situation equals the number of internal states of $V$, so $M$ can detect if $A$ has entered such a loop easily. In this manner, $M$ simulates $A$ until it determines that $A$ is looping, or has halted, or has communicated. $M$ checks the certificate for consistency with the information available to the prover, counts the number of the elements of S that are observed to accept for legitimate certificates, and halts and accepts if this counter reaches $2^{r-1}+1$.

If the input is in $L$, then a prover convinces a majority of the machines in S to accept. $ M $ would then have an accepting computation path corresponding to that prover. If the input is not in $L$, there is no prover that can fool more than half of $V$'s paths to accept together, and $ M $ therefore has no accepting path for this input.
\end{proof}

The  family of one-way multihead automata is known \cite{HKM11} to recognize a proper subclass of $\mathsf{NL}$ that properly contains the regular languages. For instance, the language of binary palindromes is not a member of this class, but its complement is. We can therefore conclude that restricting the input head of a verifier to one-way movement does reduce its overall computational power under these resource bounds.

\subsection{Real-time verifiers} \label{subsection:rt}

The class of languages recognized by real-time nfa($k$)'s is precisely the class of regular languages, since having multiple real-time heads on the same tape is no different than having a single head. In contrast, we will now show that constant-randomness finite-state verifiers with real-time access to their input can verify membership in some non-context-free languages. Consider  the language $ \mathtt{TWIN} =\{wcw \vert w\in \{a,b\}^{*}\}$ on the alphabet $\{a,b,c\}$.

\begin{thm}\label{theorem:rt} 
 $ \mathtt{TWIN} \in \mathsf{oneway\mbox{-}IP}\mathrm{(cons\mbox{-}space, cons\mbox{-}random\mbox{-}bits,rt\mbox{-}input)}.$ 
\end{thm}
\begin{proof}
We describe the verifier. Use a random bit to split to two branches on the left end-marker. The first branch immediately starts reading the certificate and comparing it with the prefix of the input that is followed by the first $c$, whereas the second branch does not consult the certificate until it sees a $c$ in the input. The first branch rejects if the comparison fails, or if it sees that the number of $c$'s in the input is not 1. The second branch compares the certificate with the input suffix after the first $c$, and rejects if that comparison fails. Both branches use two more random bits during the execution,\footnote{For instance, they may flip coins when scanning the $c$ symbol and the right end-marker.} and reject if both these bits turn out to be zero. They otherwise accept.

Members of $\mathtt{TWIN}$ will be verified to be so with probability $\frac{3}{4}$ when the certificate is the substring appearing on either side of the $c$. No certificate can convince the verifier with probability greater than $\frac{3}{8}$ when the input is not in $\mathtt{TWIN}$.
\end{proof}

{\sloppy
Note that $\mathsf{oneway\mbox{-}IP}\mathrm{(cons\mbox{-}space, cons\mbox{-}random\mbox{-}bits,rt\mbox{-}input)}$ also contains some nonstochastic languages, not recognizable 
by 2pfa's with unbounded error.
For example, by using an argument similar to the one for $ \mathtt{TWIN} $, it is not hard to show that
$ \mathtt{NH} \in   \mathsf{oneway\mbox{-}IP}\mathrm{(cons\mbox{-}space, cons\mbox{-}random\mbox{-}bits,rt\mbox{-}input)}$, where \cite{NH71}
\begin{equation*}
	\mbox{	
	$ \mathtt{NH} = \{a^{x}ba^{y_{1}}ba^{y_{2}}b \cdots a^{y_{t}}b \mid x,t,y_{1}, \cdots, y_{t}
		\in \mathbb{Z}^{+} \mbox{ and } \exists k ~ (1 \le k
                \le t), x=\sum_{i=1}^{k}y_{i} \}. $
	}
\end{equation*}
}

In our results above, although the input head was real-time, the certificate head was not, and the algorithms used its capability to stay put in some steps critically. Restricting the verifiers to having real-time access on both the input and certificate tapes would indeed reduce the class of languages with constant-space, constant-randomness to $\mathsf{REG}$, since the construction of Theorem \ref{theorem:oneway} can be adapted to obtain an equivalent 1nfa($k$) with all heads working real-time for such a verifier.

\section{2pfa's with multiple heads} \label{section:2pfak}

We have seen that constant randomness seems to convey the power of possessing multiple heads to single-head verifiers. We now turn to machines that already have multiple heads. 
In this section, we examine the effect of limiting the number of allowed coin tosses of 2pfa($k$)'s, that is, probabilistic multihead automata. This will turn out to be relevant for understanding the relationship between the class $\mathsf{L}$ and its randomized generalizations.

\subsection{Two-way heads} \label{subsection:2wayheads}

Hartmanis' proof of Facts \ref{fact:mhfa-nl} and \ref{fact:2dfak-ll} \cite{Ha72} is based on a demonstration  that multihead finite automata and logarithmic-space Turing machines can simulate each other with polynomial slowdown. This interchangeability, which extends to the probabilistic versions of these models as well \cite{Ma97}, will be useful for our analyses in this section.

Let us consider the minimum amount of useful randomness for stand-alone 2pfa($k$)'s. Of the several modes of recognition associated with probabilistic machines, (i.e. with one-sided or two-sided, bounded or unbounded error), we take the least restricted one, namely, two-sided unbounded error, where all and only the strings that are members of the language in question are to be accepted with any probability greater than $\frac{1}{2}$. We will use the following variant of Theorem \ref{theorem:AMwgen}.

\begin{thm}\label{theorem:derandomTM}
For any resource bounds $r(n)$ and $s(n)$, where $r(n)$ is computable in space $s(n)$, $s(n)$ is space constructible, and $r(n)\in O(s(n))$, 
the class of languages recognized with PTM's that use at most $r(n)$ random bits and $s(n)$ space is contained in $\mathsf{SPACE(}s\mathsf{)}$.
\end{thm}
\begin{proof}
We start with a probabilistic Turing machine $M$ with the properties mentioned in the statement of the theorem. We build a deterministic Turing machine $D$ as follows. $D$ computes $r(n)$, and starts to simulate all the $2^{r(n)}$ deterministic Turing machines that correspond to different coin sequences of $M$ sequentially on the input. Simulations that are detected to enter infinite loops (by running more than $2^{s(n)}$ steps) are cut off. $D$ counts the simulations that are seen to accept, and accepts if and only if this value exceeds $2^{r(n)-1}$.
It is clear that $D$ uses $O(s(n))$ space, and recognizes the language of $M$.
\end{proof}
Assume that we are given a 2pfa($k$), say, $P$, that uses at most logarithmically many random bits. There exists a logarithmic-space PTM, say, $T$, that uses precisely the same number of coins, and recognizes the same language as $P$ with unbounded error \cite{Ma97}. By Theorem \ref{theorem:derandomTM}, this language is in the class $\mathsf{L} $. Note that any language in $\mathsf{L}$ is trivially recognized by a 2pfa($k$) that uses no randomness by Fact \ref{fact:2dfak-ll}, so we conclude that the class of languages recognized with unbounded error by probabilistic multihead finite automata that are restricted to use an amount of random bits that is logarithmically bounded in terms of the input length is identical to the class corresponding to the deterministic versions of these machines.

Recall that $\mathsf{RL}$ is the class of languages recognized with one-sided bounded error by logspace PTM's in polynomial time. Theorem \ref{theorem:polyrand} implies that the logarithmic-randomness and polynomial-randomness classes for single-head 2pfa's coincide. An analogous result for multiple-head 2pfa's would establish that $\mathsf{L}=\mathsf{RL}$.

Let us now turn to multihead finite-state verifiers. By the relationship with logarithmic-space Turing machines mentioned above, Equations \ref{equation:detlogspace} through \ref{equation:amp}, as well as our Theorem \ref{theorem:main}, can also be viewed as statements about the power of IPS's whose verifiers are multihead finite automata with two-way heads, so we already know that we can use this verifier model to build IPS's with zero error for $\mathsf{NL}$, and that constant randomness does not increase their power. Note that the number of heads of the verifier will depend on the language under consideration in the constructions for these characterizations.
%


If we allow arbitrarily small nonzero error and polynomial randomness, but require that at most a constant number of coin tosses can be private, we can build an IPS where the finite-state verifier is a 2pfa(2) with a halting probability of 1 for every language in $\mathsf{NL}$: The first head runs the algorithm of Lemma \ref{lemma:mhfatoip}, and the second head performs a random walk whose expected completion time is a suitably large polynomial (see, for instance, \cite{DS92}). If this walk completes before the first head announces its decision, the verifier rejects. 

\subsection{One-way heads} \label{subsection:1wayheads}

A ``stand-alone" finite automaton with $k$ one-way heads using $r$ coins can be simulated with a 1dfa($k2^r$) that simulates all the $2^r$ 1dfa($k$)'s corresponding to different coin sequences in parallel, and accepts if a majority of these 1dfa($k$)'s accept. We also know the following about these machines:

\begin{fact}
For every $k>1$, the class of languages recognized with bounded error by 1pfa($k$)'s using a constant number of coins strictly contains the class of languages recognized by 1dfa($k$)'s. \cite{Fr79A,Ku91}
\end{fact}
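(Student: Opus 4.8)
The containment is immediate: a 1dfa($k$) is exactly a 1pfa($k$) that flips no coins, so it recognizes its language with zero (hence bounded) error using $0$ coins, and $0$ is a constant. Thus the entire content of the statement is the \emph{strictness} of the inclusion, i.e.\ producing, for each $k>1$, a witness language $L_k$ that is recognized with bounded error by some 1pfa($k$) tossing only $O(1)$ coins, but that lies outside the class recognized by any 1dfa($k$). The plan is therefore to (i) fix such an $L_k$, (ii) give a constant-coin bounded-error probabilistic $k$-head algorithm for it, and (iii) prove a matching deterministic lower bound.

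For the witness I would take a language whose membership is defined by a fixed pattern of block equalities (on inputs of the form $x_1 \# x_2 \# \cdots$) chosen so that the induced ``comparison graph'' cannot be serviced by $k$ one-way heads. The point is that $k$ left-to-right heads can only realize comparison patterns admitting a consistent right-moving schedule — a nested or crossing arrangement of equalities forces more than $k$ heads to be aligned at the same instant — whereas randomness lets the verifier select, with $O(1)$ coins, among a constant number of deterministic $k$-head sub-machines. Recalling the simulation noted just above the statement (a 1pfa($k$) using $r$ coins is a majority vote of the $2^r$ 1dfa($k$)'s obtained by hard-wiring the coin sequences), the task is to arrange these deterministic checks so that their acceptance pattern sits a \emph{constant} distance above $\tfrac12$ on members and the same distance below $\tfrac12$ on non-members. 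Concretely, one designs a constant-size family of local consistency checks, each executable by a single 1dfa($k$), whose number of satisfied members is always bounded away from half, with the ``majority'' case defining $L_k$; the probabilistic verifier samples a check (and, if needed, a constant number of independent samples to widen the gap), accepting on success. The delicate requirement here — and the reason constant randomness suffices — is that a non-member must fail a constant \emph{fraction} of the checks, not merely one of them, so that a single or $O(1)$-fold random probe already yields bounded error.

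The main obstacle is step (iii), the deterministic lower bound: showing that \emph{no} 1dfa($k$) recognizes $L_k$. I would prove this by a crossing-sequence / counting argument tailored to one-way multihead automata: track, along the boundaries between the input blocks, the joint information crossing each cut (the internal state together with the positions and relative ordering of the heads), bound the number of distinct such ``crossing descriptions'' by a function of $k$ and the number of states, and exhibit two inputs — one in $L_k$, one outside — that are forced to share a crossing description and hence be accepted or rejected identically, contradicting correctness. This head-scheduling incompressibility is exactly what randomness circumvents, and it is the technical heart of the separation; the complete argument is carried out in \cite{Fr79A,Ku91}.
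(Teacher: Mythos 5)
The paper gives no proof of this statement: it is labelled a \emph{Fact} and imported verbatim from \cite{Fr79A,Ku91}, so there is no internal argument to compare yours against. Judged as a self-contained proof, your proposal is a reasonable reconstruction of the shape of the argument in those references --- trivial containment, a witness language built from block equalities, a constant-coin majority-of-deterministic-submachines upper bound, and a crossing-sequence/head-scheduling lower bound --- and you correctly isolate the one genuinely delicate point, namely that constant randomness only yields bounded error if every non-member violates a constant \emph{fraction} of the checks rather than a single one.

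But that is also where the proposal stops being a proof. You never exhibit a concrete $L_k$, and the existence of a language simultaneously satisfying your three requirements is exactly the content of the theorem: (i) a family of $O(1)$ many checks, each runnable by one 1dfa($k$); (ii) the ``gap'' property that members pass all checks while non-members fail a fixed positive fraction (this does not hold for the naive choice where non-membership may be witnessed by a single bad block pair, so the witness language must be engineered --- e.g.\ with equal-length blocks and a covering/design-style family of comparison sets --- and that engineering is absent); and (iii) a comparison pattern that defeats every right-moving schedule of $k$ heads, for which you gesture at a counting argument but prove nothing. Since all three ingredients are deferred to the very references the paper cites, the proposal is a plan rather than a proof; as such it neither conflicts with nor adds to what the paper does, but it should not be presented as establishing the Fact.
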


\begin{fact}
For  a fixed $k>1$, there exists a language recognized by a 1pfa(2) using a constant number of coins, but not by any 1nfa($k$). \cite{Fr79A}
\end{fact}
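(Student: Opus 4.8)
The plan is to separate the probabilistic two-head model from the nondeterministic $k$-head model by a single witness language $L_k$, exploiting the theme recorded at the start of Section~\ref{subsection:1wayheads}: a constant supply of coins can stand in for extra reading heads. Because the coin budget must stay bounded, $L_k$ must involve only a \emph{fixed} number of independent comparisons (a number depending on $k$, not on the input length). I would take a ``crossing majority'' language of the form $L_k=\{u_1\#u_2\cdots\#u_c\,\$\,v_1\#v_2\cdots\#v_c \mid u_i,v_i\in\{0,1\}^{*},\ |\{\,i : u_i=v_{c-i+1}\,\}|>c/2\}$, where $c=c(k)$ is a fixed constant chosen large relative to $k$; membership asks that a \emph{majority} of the $c$ ``crossing'' comparisons $C_i:\ u_i=v_{c-i+1}$ hold. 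The two things to establish are (i) recognition of $L_k$ by a 1pfa(2) with $O(1)$ coins, and (ii) $L_k\notin\mathsf{1NFA}(k)$. Note that the natural recognition mode here is two-sided \emph{unbounded} error (cutpoint $\tfrac12$), which is why the statement, unlike the preceding Fact, does not mention bounded error.

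For the upper bound (i), I would have the verifier toss $\lceil\log c\rceil=O(1)$ coins at the left end-marker to pick an index $i\in\{1,\dots,c\}$ uniformly, then use its two one-way heads to position on the start of $u_i$ in the first list and on the start of $v_{c-i+1}$ in the second list and scan both blocks rightward in synchrony, accepting iff $C_i$ holds (malformed inputs are filtered by the finite control). The acceptance probability is then exactly the fraction of the $C_i$ that hold, which exceeds $\tfrac12$ precisely when $x\in L_k$, so the machine recognizes $L_k$ at cutpoint $\tfrac12$. The reason a majority condition is used rather than a conjunction is a genuine point worth spelling out: since each single 2-head one-way deterministic machine can verify at most one crossing pair per sweep, the overall acceptance probability is a convex combination of the indicators $\mathbf 1[C_i]$, and such a linear functional cannot realize an AND of $c\ge 3$ crossing checks at any cutpoint near $\tfrac12$, whereas it realizes a majority threshold exactly with uniform weights.

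The main obstacle is the lower bound (ii), which is what makes the separation genuine rather than a restatement of ``more heads help.'' I would prove it by the standard combinatorial method for the one-way multihead hierarchy \cite{HKM11} (crossing sequences / fooling sets in the style of Yao--Rivest). The key structural fact is that the comparisons are mutually \emph{crossing}: as $i$ increases the first coordinate $u_i$ moves rightward while the partner $v_{c-i+1}$ moves leftward, so in a single left-to-right sweep a pair of heads can jointly inspect at most one of the $C_i$, and heads dedicated to one crossing pair cannot be freed (they would have to move left) to certify another. Hence a 1nfa($k$) can certify at most $\lfloor k/2\rfloor$ of the crossing comparisons along any accepting computation, while confirming membership requires certifying more than $c/2$ of them; taking $c$ larger than $k$ then rules out every accepting computation on a suitably corrupted instance. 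The crux, and the part that needs the full fooling-set formalism rather than the naive head count above, is to show that no cleverer scheduling or nondeterministic guessing lets $k$ heads effectively jointly read two crossing pairs, i.e.\ to pin down the exact threshold $c(k)$ that forces $L_k$ out of $\mathsf{1NFA}(k)$.

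Finally, I would close by contrasting this with Theorem~\ref{theorem:oneway}, where constant randomness let a \emph{single}-head verifier capture the entire union $\cup_{k}\mathsf{1NFA}(k)$: here a \emph{two}-head probabilistic verifier with constant coins instead escapes every fixed level $\mathsf{1NFA}(k)$. The mechanism is exactly the trade noted at the opening of Section~\ref{subsection:1wayheads} --- $O(1)$ coins simulate a majority vote over $2^{r}$ deterministic 2-head machines, i.e.\ the effect of roughly $2\cdot 2^{r}$ heads --- which outruns any fixed $k$ once $r$ (equivalently $c$) is chosen large enough.
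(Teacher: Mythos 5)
The paper itself offers no proof of this Fact---it is imported directly from Freivalds \cite{Fr79A}---so there is no internal argument to compare yours against; I can only judge the reconstruction on its own terms. Your overall plan is in the right spirit: a constant-size family of mutually crossing block comparisons, an index chosen with $O(1)$ coins, and a threshold condition so that the acceptance probability encodes the fraction of satisfied comparisons. The upper bound is essentially correct, with two small repairs: take $c$ a power of two so the $\lceil\log c\rceil$ coins really give a uniform index, and note that the resulting cutpoint is \emph{isolated} (the gap between members and non-members is $\Theta(1/c)$, and a couple of extra coins let you recenter it), so you in fact get bounded-error recognition rather than the unbounded-error mode you settle for.

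The lower bound, however, contains a genuine gap, and you half-admit it. First, the head count is wrong: in a one-way $k$-head automaton each \emph{unordered pair} of heads can be scheduled to straddle one crossing comparison, and a single head can serve in several such pairs with different partners, so the number of crossing pairs that can be jointly inspected is $\binom{k}{2}$ (this is exactly the Yao--Rivest threshold), not $\lfloor k/2\rfloor$; your claim that heads committed to one crossing pair ``cannot be freed to certify another'' is false and would lead you to choose $c$ too small. Second, the fooling-set argument you defer to is established in the literature for the \emph{all-equal} (string-matching) language, not for your majority variant, and you never bridge the two. The cleanest repair is a padding reduction: adjoin $b-1$ dummy crossing pairs that can never match, so that with $c=2b-1$ the majority condition holds iff all $b$ genuine pairs match; a 1nfa($k$) for your $L_k$ would then yield a 1nfa($k$) for the all-equal language on $b>\binom{k}{2}$ blocks, contradicting Yao--Rivest. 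Without that reduction (or a direct counting argument for the threshold language), the separation is asserted rather than proved.
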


\begin{thm}
\label{theorem:1dfak}
Deterministic finite automata with multiple one-way heads can verify membership in precisely the languages in $ \bigcup_{k\geq 1}\mathsf{1NFA}(k) $ with zero error in linear time.
\end{thm}
\begin{proof}
Every language in $ \bigcup_{k\geq 1}\mathsf{1NFA}(k) $ can be assumed to have a nondeterministic one-way multihead finite automaton which recognizes it, and is guaranteed to halt in linear time. 1dfa($k$) verifiers can handle precisely the  same languages as  1nfa($k$) recognizers by definition. 
\end{proof}
By a simple extension of the proof of  Theorem \ref{theorem:oneway}, we have
\begin{thm}\label{theorem:2pfakver}
Finite automata with multiple one-way heads that use at most a constant amount of random bits independent of their input can verify membership (according to both the strong and  the weak definitions, and with either one-way or two-way communication with the prover) in precisely the languages in  $ \bigcup_{k\geq 1}\mathsf{1NFA}(k) $.
\end{thm}

When nonzero bounded error is tolerated, the construction of Theorem \ref{theorem:oneway} can be modified to obtain constant-coin 1pfa(2) verifiers that halt with probability 1 for each language in $ \bigcup_{k\geq 1}\mathsf{1NFA}(k) $, by simply using the second head as a clock. 

\section{Private alternation with fixed number of universal moves} \label{section:PAFA}
%

Reif \cite{Re79} defined the \textit{private alternating Turing machine} (PATM) to model \textit{two-person games of incomplete information}, where one of the players is allowed to hide some of its moves from the other player, as opposed to games of complete information, that are well-known to be modeled by the alternating Turing machines (ATM's) of \cite{CKS81}, with which we assume the reader to be familiar. A portion of the memory of a PATM is private to the universal states, and cannot be read when the machine is in an existential state. The extreme special case where the existential player cannot see any moves of the universal player is modeled by the \textit{blind alternating Turing machine} (BATM), which allows the universal states to change only that private portion. Language recognition by PATM's is defined similarly to that by ATM's: A string is accepted if and only if there exists a winning strategy for the existential player in the corresponding game.

These models are linked to our results by the observation that a language has an IPS with perfect completeness and just a guarantee that nonmembers will be accepted with probability less than 1 if and only if it is recognized by a PATM with the same space and time bounds as the verifier of that IPS: One simply views the coin-tosses of the verifier as universal moves, and the branchings due to the prover messages as possible existential moves of the PATM. Our Theorem \ref{theorem:main} can then be translated to

\begin{thm}
For any space bound $s(n)=O(\log n)$, the class of languages recognized by $s(n)$-space PATM's (or BATM's) that are allowed to make a constant number of universal moves equals $\mathsf{NL}$.
\end{thm}

For contrast, we recall the corresponding classes when the bound on the number of universal moves is removed below. ($\mathsf{BASPACE(s(n))}$ (resp. $\mathsf{PASPACE(s(n))}$) denotes the class of languages recognized by $s(n)$-space BATM's (resp. PATM's).)
\begin{fact}
$\mathsf{BASPACE(1)=NSPACE(n)}$. \cite{PR79}
\end{fact}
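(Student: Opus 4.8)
The plan is to establish the two inclusions of the equality $\mathsf{BASPACE}(1)=\mathsf{NSPACE}(n)$ separately, exploiting the fact that blindness turns the existential player's decision problem into a \emph{knowledge-set} (powerset) game. Throughout I take the bound ``$1$'' to constrain the readable work tape to a constant number of cells, while the two-way read-only input head still supplies a position ranging over $\{0,1,\ldots,n+1\}$ and hence $\Theta(\log n)$ bits of addressing power; this is where the superconstant power comes from.

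\textbf{Upper bound} $\mathsf{BASPACE}(1)\subseteq\mathsf{NSPACE}(n)$. First I would observe that a constant-space universal player has only $O(n)$ distinct configurations on an input of length $n$, the only unbounded component being the input head position. Because the existential player is blind to the universal moves, the only data it can legitimately act upon are its own move history and the readable constant-size content; consequently the correct way to evaluate the game is to maintain, after each round, the \emph{set} $S$ of universal configurations still consistent with the play. Such a set is a subset of a ground set of size $O(n)$ and is therefore storable in $O(n)$ space. A universal round replaces $S$ by the union of the successors of its members, an existential round applies the chosen (deterministic) action uniformly to every member of $S$, and the existential player wins exactly when $S$ becomes a set consisting only of accepting configurations. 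I would then simulate this by a nondeterministic machine that guesses the existential moves one at a time while keeping $S$ on a linear-size work tape, together with a counter that cuts the computation off after $2^{O(n)}$ rounds (the number of distinct possible sets $S$); it accepts iff $S$ reaches an all-accepting set. This is an $\mathsf{NSPACE}(n)$ procedure.

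\textbf{Lower bound} $\mathsf{NSPACE}(n)\subseteq\mathsf{BASPACE}(1)$. Given a linear-bounded nondeterministic machine $N$, I would build a constant-space BATM in which the existential player \emph{streams} an accepting computation tableau of $N$ (width $O(n)$, possibly exponential length), emitting one symbol per move and discarding it immediately, so that only constant work space is ever in use. The universal player's task is to verify local consistency: using a privately chosen column index $i$, carried on the input head rather than on the bounded work tape, together with the constant-size $3$-cell window it needs, it checks that column $i$ of each row follows from the previous row under the transition relation of $N$. Since the existential player is blind to $i$, no streamed tableau can violate a single local rule without being caught on the universal branch that happens to guard that column; hence the existential player has a winning strategy if and only if some genuinely consistent accepting tableau of $N$ exists, i.e. if and only if the input lies in $L(N)$.

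The step I expect to be the main obstacle is the resource bookkeeping in the lower bound: the tableau is exponentially long, so the construction must be organised as a streaming (cycling) protocol in which the verifier never needs more than a constant number of readable cells, and the column index $i$ must genuinely be offloaded onto the private input-head position rather than stored on the work tape, since otherwise the space would fail to be constant. A secondary point to get right is the synchronisation between the two players' paces, so that the universal challenger can align its window with the existential player's stream while keeping $i$ hidden, thereby preserving the blindness requirement. These accounting details are precisely where the Peterson--Reif argument of \cite{PR79} does its work, and I would follow that reference for the exact encoding.
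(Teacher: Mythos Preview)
The paper does not prove this statement: it is listed as a \emph{Fact} with a bare citation to \cite{PR79} and no accompanying argument, so there is no ``paper's own proof'' against which to compare your proposal. Your sketch is in the spirit of the Peterson--Reif argument and you yourself defer to \cite{PR79} for the details; that is exactly what the paper does too.

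One technical caution on your lower bound: under the paper's definition, a BATM's universal states may change \emph{only} the private portion of memory. Your construction relies on the universal player privately encoding the column index $i$ in the input-head position; for this to be legal you need the input head (or a second, private head) to be part of the private memory, since otherwise any universal head movement would be visible to the existential player and blindness would be violated. This is precisely the kind of model detail that makes the result delicate, and it is handled in \cite{PR79}; make sure the variant of BATM you invoke actually grants the universal player a private head with $\Theta(\log n)$ bits of positional freedom, or else rework the verification so that the universal player's private choice is a single nondeterministic ``start checking now'' signal that requires only constant private storage.
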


\begin{fact}
$\mathsf{PASPACE(1)=E}$. \cite{PR79}
\end{fact}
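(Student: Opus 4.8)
The plan is to prove the two inclusions $\mathsf{PASPACE}(1)\subseteq\mathsf{E}$ and $\mathsf{E}\subseteq\mathsf{PASPACE}(1)$ separately, where throughout I take $\mathsf{E}=\DTIME{2^{O(n)}}$ and assume the input head is shared (public) while only a constant number of work-tape cells are private. The phenomenon to be explained is that \emph{ordinary} constant-space alternation buys nothing beyond $\mathsf{REG}$ (two-way alternating finite automata recognize only the regular languages), so the entire exponential gap must come from the privacy of the universal moves; the proof should make explicit the mechanism that converts information-hiding into time.

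For $\mathsf{PASPACE}(1)\subseteq\mathsf{E}$, I would run the ensemble/belief-state simulation already used in the proof of Lemma~\ref{lemma:iptomhfa}, but now \emph{without} any polynomial bound on the number of ensembles. A full configuration of a constant-space PATM $M$ on an input of length $n$ is a tuple (state, input-head position, entire work-tape content), of which only the head position is unbounded, so $M$ has only $O(n)$ configurations. From the existential player's viewpoint the relevant object is the \emph{belief state}: the set of full configurations consistent with the publicly visible history. A belief state is a subset of an $O(n)$-element set, so there are $2^{O(n)}$ of them. I would build the finite reachability/safety game whose positions are belief states, with the existential player moving on the public information and the universal (possibly private) moves inducing the subset-successor update, and observe that $M$ accepts $x$ iff the existential player wins from the initial belief state. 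The winner of such a finite game is computable in time polynomial in the number of positions, i.e. $2^{O(n)}$, placing the language in $\mathsf{E}$. (For a general space bound $s(n)$ the same count gives roughly $2^{\,n\cdot 2^{O(s(n))}}$, specializing to $\mathsf{E}$ at $s=O(1)$ and to $\exptime$ at $s=\log n$.)

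For the reverse inclusion $\mathsf{E}\subseteq\mathsf{PASPACE}(1)$, I would start from a deterministic machine $N$ deciding $L$ in $T=2^{cn}$ steps and design a constant-space PATM whose game verifies the single deterministic computation of $N$ on $x$. The skeleton is the standard alternating verification of a computation tableau (as for $\ASPACE{\cdot}$ in \cite{CKS81}): the existential player asserts contents of cells of the $T\times T$ tableau, and a universal challenge reduces a claim about a cell of row $t+1$ to claims about the three cells of row $t$ that determine it, bottoming out at the initial row, which is read directly off the input. The difficulty is that a cell is named by a pair $(t,p)$ of $\Theta(n)$-bit indices, which a constant-space machine cannot store. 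The key idea---and the point where privacy is indispensable---is that these indices are never held on the work tape at all: the universal player refines the challenged position one bit at a time over the exponentially many moves that the unbounded universal mode permits, keeping only $O(1)$ private bits at any instant, so that the \emph{path taken in the game tree}, rather than any tape content, encodes the full index. Privacy guarantees that the existential player cannot see which position is being narrowed down and therefore must commit to a globally consistent tableau; any deviation from $N$'s true computation leaves some universal line of play that exposes an inconsistency, so a non-member cannot be won by the existential player, while for $x\in L$ the honest description of $N$'s computation is a winning strategy.

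I expect the main obstacle to be exactly this bookkeeping in the lower bound: organizing the recursive consistency check so that advancing the $\Theta(n)$-bit position indices, switching between the two halves of a time interval, and returning to the correct continuation are all realizable with a fixed number of private cells together with the control afforded by the game tree. Verifying that the resulting game has value $1$ precisely for $x\in L$---perfect completeness together with soundness strictly below $1$, matching the PATM/IPS correspondence noted before the statement---is then a matter of checking that each class of cheating is punished by some universal response. The upper-bound direction I regard as routine given the ensemble method already in hand.
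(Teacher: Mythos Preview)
The paper does not prove this statement; it is quoted as a known fact from Peterson and Reif \cite{PR79}, alongside the companion facts about $\mathsf{BASPACE}(1)$, $\mathsf{BASPACE}(\log)$, and $\mathsf{PASPACE}(\log)$, so there is no argument in the paper to compare your attempt against.

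One substantive remark on your setup, however. You assume the input head is public and only the $O(1)$ work-tape cells are private. Under that convention the existential player always knows the head position, so a belief state is not an arbitrary subset of the $O(n)$ full configurations but only a subset of the $O(1)$ possible private-tape contents compatible with the known head position; your own ensemble argument then yields merely polynomially many game positions and would place $\mathsf{PASPACE}(1)$ inside $\ptime$, which makes the reverse inclusion $\mathsf{E}\subseteq\mathsf{PASPACE}(1)$ hopeless. In the Peterson--Reif model underlying this fact (and consistent with the IPS/PATM correspondence the paper spells out just before the statement, where the prover does not see the verifier's head), the input-head position is itself hidden from the existential player. That hidden head is exactly the $n$-valued private ``register'' the universal player needs, and it is the mechanism you are reaching for in your lower-bound sketch when you speak of the game-tree path encoding an index: once the head is private, the universal player simply parks it on the tape cell it wishes to audit, and the $\Theta(n)$-bit bookkeeping difficulty you anticipate largely dissolves.
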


\begin{fact}
$\mathsf{BASPACE(log)=PSPACE}$. \cite{Re79}
\end{fact}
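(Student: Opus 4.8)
I would establish the two inclusions separately, with the containment $\mathsf{BASPACE(log)}\subseteq\pspace$ being routine and the reverse containment being the substantive, and initially surprising, direction (surprising because the complete-information analogue $\mathsf{ASPACE(log)}=\ptime$ is much weaker; compare the public-coin collapse of Equation \ref{equation:amp}). For the easy direction I would exploit the fact that the existential player of a BATM is blind to the private portion, so its entire observable state at any moment is just the \emph{public} configuration (internal state, input-head position, and the $O(\log n)$ public work tape), of which there are only polynomially many, say $c(n)$. Hence an existential strategy is nothing more than a function from these $c(n)$ public configurations to the constantly many available moves, so there are at most $2^{O(c(n))}$ distinct strategies. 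A deterministic machine can cycle through all of them using a polynomial-length counter, and, having fixed one strategy $\sigma$, the machine is left with universal (private) moves only; checking that $\sigma$ forces acceptance against \emph{every} universal play is a co-reachability question on the polynomially-sized graph of full configurations (a full configuration being public $+$ private, each $O(\log n)$), solvable well within polynomial space. Accepting iff some $\sigma$ wins places the language in $\pspace$.

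For the reverse inclusion $\pspace\subseteq\mathsf{BASPACE(log)}$ I would invoke the interactive-proof characterization the paper has already recorded, namely $\mathsf{IP}\mbox{(poly-time, log-space)}=\pspace$ (Equation \ref{equation:pspace}), together with the game-theoretic dictionary described at the start of this section. Reading a log-space private-coin interactive proof system for a $\pspace$ language as an alternating machine, I would identify each coin toss of the verifier with a universal move that writes its outcome into the private portion, and each prover message with an existential move written on the public portion. Because the coins are \emph{private}, the prover's strategy cannot depend on them; equivalently, the existential player is blind to the private portion, which is precisely the defining restriction of a BATM, while the verifier's own coin-dependent transitions are absorbed into universal states, which are permitted to read and alter only that private portion. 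Taking the interactive proof in its perfect-completeness form with nonmembers accepted with probability strictly below $1$ makes ``the verifier accepts for all coin strings against the honest prover'' coincide with ``the existential player has a winning strategy against all universal play,'' so the resulting log-space blind alternating machine recognizes exactly the target $\pspace$ language.

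The main obstacle lies entirely in this second inclusion, and it is conceptual rather than computational: blindness forbids the standard Savitch-style alternating reachability recursion, in which the existential player answers the universal player's choice of a midpoint configuration, because here the existential player never learns that choice. The power to nonetheless capture the exponentially long computations underlying $\pspace$ comes precisely from this incomplete information: the existential player must commit, blindly, to a single globally consistent transcript that survives every universal challenge, which is exactly the prover-commits / verifier-spot-checks mechanism behind the log-space private-coin proof for $\pspace$. The delicate point to verify carefully is the faithful translation of the randomized acceptance semantics of Equation \ref{equation:pspace} into the purely logical ``exists-strategy-beating-all-universal-plays'' acceptance of a BATM, and in particular that the verifier's reading of its private coins is confined to universal states so that the existential (public) moves remain genuinely blind; this is where the private-coin hypothesis is indispensable, since the public-coin analogue collapses only to $\ptime$ by Equation \ref{equation:amp}.
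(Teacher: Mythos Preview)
The paper does not prove this statement; it is recorded as a fact from \cite{Re79} for contrast with the paper's own theorem about private alternation with a constant number of universal moves. There is thus no in-paper proof to compare against, and I can only evaluate your sketch on its own terms.

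Your containment $\mathsf{BASPACE(log)}\subseteq\pspace$ is essentially fine. The reverse inclusion, however, has a real gap. You invoke Equation~\ref{equation:pspace}, which concerns \emph{two-way} private-coin IPS's, and translate the protocol to a BATM by routing coin tosses to the private portion and prover messages to the public portion, absorbing all of the verifier's computation into universal states that ``read and alter only that private portion.'' But this leaves no place for the verifier's \emph{messages to the prover}: in a two-way IPS those messages are coin-dependent and must be visible to the prover, yet in a BATM the universal states cannot write anything the existential player sees. Your translation therefore silently severs the verifier-to-prover channel and collapses the protocol to a one-way IPS. (Relatedly, your claim that ``the prover's strategy cannot depend on'' the private coins is not right for two-way protocols: the prover sees the verifier's coin-dependent messages, so the prover's actual message sequence varies with the coin string even though its strategy, as a function of the transcript, does not.) With bounded error, one-way log-space IPS's give only $\nptime$ (Equation~\ref{equation:co93}); the weaker BATM acceptance condition---perfect completeness together with ``probability strictly below $1$'' for nonmembers---does buy more, but establishing that it buys all of $\pspace$ is precisely the content of Reif's theorem, not a consequence of Equation~\ref{equation:pspace}. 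Reif's direct argument, which predates $\mathsf{IP}=\pspace$, has the existential player stream an alleged $\mathsf{NPSPACE}$ accepting computation configuration by configuration while the universal player privately commits to a single cell index $j$ storable in $O(\log n)$ bits; the machine checks that position $j$ is correct in the initial configuration and evolves legally at each step, so any cheating is caught at some $j$. That idea is what is missing from your sketch.
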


\begin{fact}
$\mathsf{PASPACE(log)=EXPTIME}$. \cite{Re79}
\end{fact}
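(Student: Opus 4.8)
The plan is to prove the two inclusions $\mathsf{PASPACE}(\log)\subseteq\mathsf{EXPTIME}$ and $\mathsf{EXPTIME}\subseteq\mathsf{PASPACE}(\log)$ separately. A convenient tool for both directions is the identity $\mathsf{EXPTIME}=\mathsf{DTIME}(2^{\mathrm{poly}(n)})=\mathsf{APSPACE}$ (alternating polynomial space), which follows from the standard relation $\mathsf{ASPACE}(s)=\mathsf{DTIME}(2^{\Theta(s)})$. This lets me view the gap between the log-space private model and exponential time as exactly the one-exponential jump contributed by the private (hidden) portion of the memory, and to phrase the hard inclusion as a space-efficient simulation rather than a time-efficient one.

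For the upper bound, the key object is the \emph{information set}: after any public interaction history, the existential player cannot distinguish among the full configurations (public part together with the hidden private part) that are consistent with what it has observed so far. Since the machine runs in space $O(\log n)$, each configuration is describable in $O(\log n)$ bits, so there are only $\mathrm{poly}(n)$ configurations and hence at most $2^{\mathrm{poly}(n)}$ information sets. I would build an ordinary complete-information game graph whose nodes are these information sets: an existential move is a function of the current (public) information set alone, while a universal/private move refines or splits the set according to the hidden choice. Acceptance of the input is then exactly the existence of a winning existential strategy, which is the value of a monotone least fixed point over this graph. Solving it by backward induction takes time polynomial in the number of nodes, i.e. $2^{\mathrm{poly}(n)}$ time, placing the language in $\mathsf{EXPTIME}$.

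For the lower bound it suffices to show that a private-alternating log-space machine $M$ can decide membership in any language accepted by a deterministic $2^{\mathrm{poly}(n)}$-time machine, by verifying the latter's computation tableau. The difficulty is that $M$ has only $O(\log n)$ cells and cannot store even a single configuration of the simulated computation, whose tableau is $2^{\mathrm{poly}(n)}$ cells wide and tall, so that a single cell is named by a $\mathrm{poly}(n)$-bit index. The technique is to let the existential player claim the tableau's contents and to have the universal player \emph{privately} pick a cell to challenge, supplying the $\mathrm{poly}(n)$-bit challenge index one bit at a time over $\mathrm{poly}(n)$ rounds. At each round $M$ holds only the current bit together with an $O(\log n)$-bit round counter (a counter running up to $\mathrm{poly}(n)$ fits in logarithmic space), and the existential player must commit to the relevant cell contents before learning the hidden index. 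Because privacy denies the existential player any information about which cell will eventually be examined, consistency with the local tableau window (relating each cell to its three predecessors) must hold \emph{everywhere}, which forces a faithful simulation; acceptance of the top row then corresponds to acceptance by the simulated machine. The main obstacle is exactly this lower-bound gadget: one must design the bit-by-bit private commitment so that (i) the existential player provably gains no information about the challenge, and (ii) the local consistency checks on a window of cells are enforceable while only $O(\log n)$ bits are ever live. This is precisely where the single exponential of extra power supplied by incomplete information is extracted, and verifying both of these properties rigorously is where the bulk of the work lies.
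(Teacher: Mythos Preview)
The paper does not prove this fact; it simply records it as a citation to Reif, so there is no in-paper argument to compare against. Your upper bound via information sets is the standard argument and is correct: with $O(\log n)$ total space there are only polynomially many full configurations, hence at most $2^{\mathrm{poly}(n)}$ information sets, and the induced complete-information game can be solved by a monotone fixed point in time polynomial in its size.

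Your lower bound, however, has a real gap. You try to verify the tableau of a $2^{\mathrm{poly}(n)}$-time deterministic machine directly, with the universal player privately selecting a cell to challenge. But a cell of that tableau is named by a $\mathrm{poly}(n)$-bit index, and a log-space PATM has only $O(\log n)$ bits of memory in total, \emph{including} the private portion. Streaming the challenge index ``one bit at a time'' while retaining only the current bit and a round counter does not help: once an index bit is discarded, neither the machine nor the universal player can use it later to locate the challenged cell within whatever the existential player is streaming, so the local-window consistency check you describe cannot actually be carried out. You flag this as the place where ``the bulk of the work lies,'' but no mechanism is supplied, and I do not see one that stays within $O(\log n)$ space. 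The clean fix is to use the identity you announce at the outset but then abandon: prove $\mathsf{APSPACE}\subseteq\mathsf{PASPACE}(\log)$ rather than attacking $\mathsf{DTIME}(2^{\mathrm{poly}(n)})$ head-on. A polynomial-space alternating machine $A$ has $\mathrm{poly}(n)$-bit configurations, so a \emph{position within a configuration} is named by only $O(\log n)$ bits and does fit in private memory. The outer universal player privately fixes such a position $j$; the outer existential player streams successive configurations of $A$ bit by bit; at configurations whose state is $A$-universal the outer universal player makes a \emph{public} choice of successor (this public move is exactly what separates private from blind alternation and lifts you from $\mathsf{PSPACE}$ to $\mathsf{EXPTIME}$); and local consistency at column $j$ is checked between consecutive configurations. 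Since the existential player never learns $j$, any inconsistency anywhere is caught on some universal branch, and the simulation is faithful.
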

\section{Open questions} \label{section:Conclusion}

We have been able to represent the relationship between $\mathsf{NL}$ and $\mathsf{NP}$ in the form
\[\mathsf{NL} = \mathsf{IP}\mathrm{(cons\mbox{-}space, cons\mbox{-}random\mbox{-}bits)} \subseteq \mathsf{IP}\mathrm{(log\mbox{-}space, log\mbox{-}random\mbox{-}bits)}  = \mathsf{NP}.\]
Further examination of   other classes like $\mathsf{IP}\mathrm{(cons\mbox{-}space, log\mbox{-}random\mbox{-}bits)}$ would be interesting.

Do our results for constant-space verifiers stand when a polynomial bound is imposed on the overall runtime? Every language that can be verified by a constant-ran\-domness 2pfa that halts with probability 1 is recognized in linear time by a 2nfa($k$) for some $k$. Does there exist a language in  $\mathsf{NL}$ which cannot be recognized in linear time by any  2nfa($k$)?


Tables \ref{table:recognizers} and \ref{table:verifiers} summarize some of our findings on bounded-randomness 2pfa($k$) variants, both as recognizers, and as verifiers in one-way IPS's. $\mathsf{BPTISP(poly,log)}$ denotes the class of languages recognized with bounded error by PTM's operating in polynomial time and logarithmic space. The cells marked $?_1$ and $?_2$ correspond to classes that contain the classes corresponding to the cells to their left, and are contained in the classes corresponding to the cells above them. Can one find better characterizations for these classes?

\begin{table}[h!]
	\centering
	\caption{Complexity classes associated with different settings of 2pfa variants as bounded-error recognizers.}
	\begin{tabular}{|l|c|c|c|c|}
		 \hline  \textit{randomness complexity:}  
		& 0 & $ cons$ & $ log $ & $poly$
		\\ \hline \hline
		\textbf{single-head two-way} ~~  & $\mathsf{REG}$& $\mathsf{REG}$& $\mathsf{REG}$& $\mathsf{REG}$
		\\ \hline
		\textbf{single-head one-way} & $\mathsf{REG}$& $\mathsf{REG}$& $\mathsf{REG}$& $\mathsf{REG}$ 
		\\ \hline
		\textbf{multihead two-way} & $\mathsf{L}$& $\mathsf{L}$& $\mathsf{L}$			& $\mathsf{BPTISP(poly,log)}$
		\\ \hline	
		\textbf{multihead one-way} &$\bigcup_{k\geq 1}1DFA(k)$ & $\bigcup_{k\geq 1}1DFA(k)$ & $	?_1	$	& $  ?_2 $ 
		\\[2pt] \hline		
	\end{tabular}
	\label{table:recognizers}
\end{table}

\begin{table}
	\centering
	\caption{Complexity classes associated with different settings of 2pfa variants as bounded-error verifiers.}
	\begin{tabular}{|l|c|c|}
		 \hline  \textit{randomness complexity:}  
		& 0 & $ cons$ 
		\\ \hline \hline
		\textbf{single-head two-way} ~~  & $\mathsf{REG}$& $\mathsf{NL}$ 
		\\ \hline
		\textbf{single-head one-way} & $\mathsf{REG}$& 
		$ \bigcup_{k \geq 1} \mathsf{1NFA}(k) $
		\\[2pt]
		\hline
		\textbf{multihead two-way} & $\mathsf{NL}$& $\mathsf{NL}$ 
		\\ \hline	
		\textbf{multihead one-way} &$\bigcup_{k\geq 1}\mathsf{1NFA}(k)$ & $\bigcup_{k\geq 1}\mathsf{1NFA}(k)$ 
		\\[2pt] \hline		
	\end{tabular}
	\label{table:verifiers}
\end{table}
Although we have proved that 
\[
	\mathsf{IP}_w\mathrm{(cons\mbox{-}space, cons\mbox{-}random\mbox{-}bits)} = \mathsf{IP}\mathrm{(cons\mbox{-}space, cons\mbox{-}random\mbox{-}bits)},
	\]
we are able to reduce the error probabilities to arbitrary desired positive values only for verification according to the weak definition. Is this also possible for the strong definition? 
Similarly, is there a way to reduce the error (which gets worse as the number of heads in the simulated automaton increases) of single-head verifiers with one-way access to their inputs to arbitrary desired positive values?

In their study \cite{NY09} of interactive proof systems whose verifiers are quantum finite automata (qfa's), Nishimura and Yamakami used a weak model of real-time qfa's \cite{KW97} whose stand-alone versions cannot even recognize all regular languages. They showed that letting such verifiers communicate with a prover results in a proof system which can handle all and only the regular languages. Since general qfa models \cite{Hi10,YS11A} that make full use of the nonclassical features of quantum mechanics are able to simulate any corresponding classical system easily, we conclude that one-way interactive proof systems that would use qfa's defined according to this modern approach would be able to handle all of $\mathsf{oneway\mbox{-}IP}\mathrm{(cons\mbox{-}space, cons\mbox{-}random\mbox{-}bits,rt\mbox{-}input)}$, outperforming the systems of \cite{NY09}, despite the fact that the latter allow for two-way interaction between the verifier and the prover. The study of qfa verifiers is an interesting avenue for further research.

\section*{Acknowledgements} \label{section:Acknowledgements}

We are grateful to Martin Kutrib, who helped us immensely with our questions about nfa($k$)'s. We also thank R\={u}si\c{n}\v{s} Freivalds, Taylan Cemgil, Richard Lipton, and G\"{o}kalp Demirci for their helpful answers, Alexander Rivosh for his valuable assistance with the references in Russian, and the anonymous referees for their constructive remarks.

\bibliographystyle{alpha}
\bibliography{YakaryilmazSay}

\appendix

\section{The proof of Theorem \ref{theorem:polyrand}} \label{appendix:one}

Our proof of Theorem \ref{theorem:polyrand} is based on the following 
\cite{DS90,KF91}

\begin{fact}\label{fact:ds90}
For any polynomial $p$, 2pfa's with expected runtime $O(p(n))$ recognize  only the regular languages with bounded error.
\end{fact}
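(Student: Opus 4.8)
The plan is to prove regularity through the Myhill--Nerode theorem, by showing that a 2pfa $M$ recognizing $L$ with two-sided error at most $\varepsilon<\frac12$ (hence with a confidence gap $g=1-2\varepsilon>0$) and expected runtime $O(p(n))$ induces only finitely many equivalence classes on input prefixes. The central object I would attach to each prefix $x$ is a \emph{transfer matrix} $A_x$ describing $M$'s boundary behavior on the block occupied by $x$: for each ordered pair of control states $(q,q')$ and each pair of sides, $A_x$ records the probability that, entering the $x$-block from a given side in state $q$, the computation eventually exits that block from a given side in state $q'$, with the complementary mass accounting for halting inside the block. Since $M$ has a fixed finite state set $Q$ of size $s$, every $A_x$ is a point of a cube $[0,1]^N$ of fixed dimension $N=N(s)$, independent of $|x|$.

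First I would record the compositional structure: for any completion $y$, the overall acceptance probability $\Pr[M\text{ accepts }xy]$ is obtained from $A_x$ and the corresponding data of $y$ by summing, over all numbers of times the input head crosses the cut between $x$ and $y$, the products of the two boundary operators --- the usual sum-over-crossings expansion. The expected-runtime hypothesis enters through a truncation: by Markov's inequality the computation on an input of length $n$ runs for more than $T=\Theta(p(n)/g)$ steps with probability at most $g/4$, so replacing $M$ by its $T$-step truncation $M_T$ changes every acceptance probability by at most $g/4$ while preserving the gap up to $g/2$. Within $T$ steps the head crosses the cut at most $T$ times, so the truncated acceptance probability is a polynomial of degree $O(T)$ in the entries of $A_x$.

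A hybrid argument then bounds the sensitivity of this truncated probability to $A_x$: perturbing the at most $T$ applications of the boundary operator one at a time and telescoping shows that if $A_x$ and $A_{x'}$ differ by $\delta$ in the sup norm, then $\bigl|\Pr[M_T\text{ accepts }xy]-\Pr[M_T\text{ accepts }x'y]\bigr|\le O(T)\,\delta$ for every $y$. Consequently, if $x$ and $x'$ are \emph{distinguishable} --- some $y$ places $xy$ in $L$ and $x'y$ outside $L$, forcing an acceptance-probability gap of at least $g$ --- then their transfer matrices must be separated by $\|A_x-A_{x'}\|\ge g/O(T)=g/O(p(n))$, where $n$ is the length of the separating input. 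Combining such a separation with the total boundedness of the cube $[0,1]^N$ (a $\delta$-net of size $(1/\delta)^{O(N)}$) by a pigeonhole argument is what should bound the number of classes and force $L$ to be regular.

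The main obstacle is precisely the dependence of the separation threshold $g/O(p(n))$ on the length $n$ of the distinguishing input: since the amplification factor $O(T)=O(p(n))$ grows with $n$, a single precision $\delta$ does not immediately bound the number of classes, and two prefixes with nearly identical transfer matrices could in principle still be separated by a very long continuation. Breaking this circularity is the heart of the Dwork--Stockmeyer / Ka\c{n}eps--Freivalds argument: one shows, through a more careful analysis of how an ensemble of close behaviors must diverge under repeated crossings, that distinguishing behaviors agreeing to precision $\delta$ forces expected time growing at least like $1/\delta$, so that a genuinely nonregular language --- with infinitely many Myhill--Nerode classes nested in the compact behavior space --- cannot be recognized within any polynomial expected-time budget. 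The Fact is then the contrapositive of this time-gap statement, and the technical work lies almost entirely in making the uniform amplification control of this last step rigorous.
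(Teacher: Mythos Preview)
Your outline is the Dwork--Stockmeyer argument, which is precisely what the paper sketches (the Fact is cited, and Appendix~A summarizes the DS proof in order to adapt it). Your ``transfer matrices'' $A_x$ are DS's $2c$-state Markov chains $P_{M,xy}$, and your closeness-implies-close-acceptance step is their Lemma~4.2.

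Two points of comparison are worth making. First, where you work with unrestricted Myhill--Nerode classes and then run into the circularity that the separation threshold $g/O(p(n))$ depends on the length of the distinguishing continuation, DS (and the paper) work with \emph{$n$-dissimilarity}: pairs of prefixes of length $\le n$ separated by a continuation keeping total length $\le n$, counted by $N_L(n)$. This pins the time budget at $T(n)$ uniformly over all relevant pairs for each fixed $n$, so the pigeonhole is clean and the circularity you flag is structurally avoided; the nonregularity input is then the Shallit--Karhum\"aki--Freivalds fact that $N_L(n)\ge n/2+1$ infinitely often. Second, your truncation-plus-hybrid bound $|\Delta\Pr|\le O(T)\,\delta$ is genuinely too weak even after switching to $n$-dissimilarity: it yields only $N_L(n)\le (O(T(n)))^{O(1)}$, which is compatible with polynomial $T$ and linear $N_L$. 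DS's Lemma~4.2 uses a multiplicative (ratio) notion of closeness on the Markov-chain entries rather than sup-norm, together with the expected-absorption-time hypothesis, and obtains a much sharper dependence that forces $T(n)$ to be superpolynomial when $N_L(n)$ is unbounded. You are right that this lemma is where the real work lies; your sketch correctly isolates it but does not supply it, and the additive hybrid argument you propose cannot substitute for it.
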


We start by noting that no such program which respects a worst-case bound $b(n)$ on the number of random bits that it uses can possibly have a computational path in which a configuration of the form $(r,i)$, where $r$ is a coin-tossing state,  and $i$ is a head position, repeats. Therefore, $O(n)$ is a tight
bound on the number of usable random bits under a worst-case regime.  It is also clear that contiguous subsequences of configurations with deterministic states can have at most  linear length in halting computational paths. Therefore, all halting paths of such a machine have worst-case runtime $O(n^{2})$. Furthermore, any nonhalting path must  enter an infinite loop of deterministic configurations in  $O(n^{2})$ steps.

When $b(n)$ is a bound on the \textit{expected} number of coin tosses, it has a similar relationship with the runtime. Any halting path that tosses $k$ coins has length $O(kn)$. Any nonhalting path with nonzero probability must toss only a finite number ($k$) of coins, so it must enter an infinite deterministic loop within $O(kn)$ steps. So the expected runtime of the halting paths is $O(n b(n))$.

We could use Fact \ref{fact:ds90} directly to prove Theorem \ref{theorem:polyrand} if we had a guarantee that the machines we consider have polynomial expected time. There is no such comfort, however, since it is easy to demonstrate cases where a sizable ratio of computational paths do not halt, and expected time is therefore not bounded.

For this reason, we look at the proof of Fact \ref{fact:ds90} in detail. One starts by defining a quantitative measure of the nonregularity of a language $L\subseteq \Sigma^{*}$. For a positive integer $n$, two strings $w, w' \in \Sigma^{*}$ are  \textit{n-dissimilar}, written $w \nsim_{L,n} w'$,  if $|w| \leq n$, $|w'| \leq n$, and there exists a \textit{distinguishing string} $v \in \Sigma^{*}$ with $|wv| \leq n$, $|w'v| \leq n$, and $wv \in L$ iff $w'v \notin L$. Let $N_{L}(n)$ be the maximum $k$ such that there exist $k$ distinct strings that are pairwise $\nsim_{L,n}$. It can be shown  \cite{Ka67,KF90,SB96} that 

\begin{fact}\label{fact:ndiv2}
If $L$ is not regular, then $N_{L}(n)\geq \frac{n}{2} + 1$ for infinitely many $n$.
\end{fact}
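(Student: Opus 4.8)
The plan is to reduce the claim to a length-bounded Myhill--Nerode counting argument and to extract the linear rate from the fact that, for a non-regular $L$, the relevant refinement never stabilizes. The cleanest route is to work in the \emph{balanced} regime: to witness $N_L(n)\ge \tfrac{n}{2}+1$ it suffices to exhibit, for infinitely many $n$, a set of at least $\lfloor n/2\rfloor+1$ strings each of length at most $\lfloor n/2\rfloor$ that are pairwise separated by a distinguishing string of length at most $\lfloor n/2\rfloor$; for then $|w|,|w'|\le n$ and $|wv|,|w'v|\le n$ hold automatically, so every such pair is $\nsim_{L,n}$. Thus I would set $m=\lfloor n/2\rfloor$ and aim to lower-bound the number of distinct ``$m$-futures'' $F_m(w)=\{v : |v|\le m,\ wv\in L\}$ realized by strings $w$ with $|w|\le m$.

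The rate would come from two monotone refinement facts, each an instance of the standard Myhill--Nerode fixpoint argument. First, reachability: letting $R(\ell)$ be the number of distinct Myhill--Nerode classes possessing a representative of length at most $\ell$, if $R(\ell+1)=R(\ell)$ then the set of length-$\le\ell$ classes is closed under appending a symbol, hence finite, hence $L$ is regular; so for non-regular $L$ one gets $R(\ell)\ge \ell+1$ for every $\ell$. Second, distinguishing: letting $h(m)$ be the index of the relation ``$w,w'$ agree on all suffixes of length $\le m$'', the same fixpoint reasoning (if level $m$ and level $m+1$ induce the same partition then all further levels do) forces strict refinement at every step, giving $h(m)\ge m+1$. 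Each of these already produces linearly many pairwise-distinguishable strings; the point of both being strict \emph{at every level} is precisely what converts ``unbounded'' into the linear $n/2$ rate.

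The main obstacle is that these two facts control \emph{different} dimensions: $R(\ell)\ge\ell+1$ furnishes many classes with \emph{short representatives} but says nothing about the length of the distinguishers separating them, while $h(m)\ge m+1$ furnishes many behaviors separable by \emph{short distinguishers} but possibly only through long representatives. For $\nsim_{L,n}$ I need both the $w$'s and the $v$'s short simultaneously, i.e. a lower bound on the ``diagonal'' $H(m)=\bigl|\{F_m(w) : |w|\le m\}\bigr|$, which cannot be read off from the two axis bounds alone. I therefore expect the crux to be a length-control/balancing step, and this is presumably the technical heart of \cite{KF90}. The plan here is to argue by contradiction along a subsequence: assuming $N_L(n)\le \tfrac{n}{2}$ for all large $n$ caps $H(m)$ for all large $m$, and I would show that such a cap forces the combined (reachability-plus-distinguishing) refinement to stabilize once both budgets exceed the cap, so that no new $m$-future can ever appear and $L$ is regular, contradicting the hypothesis. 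Because the contradiction only needs to be triggered on a subsequence of lengths where a genuinely new behavior surfaces with both a short representative and a short distinguisher, the conclusion is naturally of the ``for infinitely many $n$'' form, and tracking the constraint $|w|+|v|\le n$ through this balancing is exactly what yields the factor $\tfrac12$.
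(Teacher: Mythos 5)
Before comparing: the paper itself does not prove this statement. Fact \ref{fact:ndiv2} is imported from \cite{KF90} (the appendix says only ``it can be shown''), so your proposal has to be judged on its own terms. On those terms, much of it is right: the reduction to the balanced regime is valid (for even $n$, a set of $n/2+1$ strings of length at most $n/2$, pairwise separated by suffixes of length at most $n/2$, is indeed pairwise $\nsim_{L,n}$; for odd $n$ your count falls short by one, so you should work with even $n$ only), both of your axis lemmas ($R(\ell)\ge\ell+1$, and index of $\equiv_m$ at least $m+1$, where $u\equiv_m u'$ means agreement on all suffixes of length $\le m$) are correct with exactly the fixpoint proofs you indicate, and you have located the crux precisely: one needs a lower bound on the diagonal quantity $H(m)$, which neither axis bound gives.

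The gap is that your resolution of the crux is only an assertion, and the mechanism it appeals to is false. The fixpoint arguments behind the two axis lemmas work because appending a symbol preserves the relevant relation: $u\sim_L u_0$ gives $ua\sim_L u_0a$, and $\equiv_{k+1}=\equiv_k$ on all of $\Sigma^*$ gives $\equiv_{k+2}=\equiv_{k+1}$. For the diagonal relation, appending a symbol costs one unit of suffix budget: $u\equiv_{m+1}u_0$ yields only $ua\equiv_{m}u_0a$, and the induction that would turn ``stabilize once'' into ``stabilize forever'' collapses. This is a real phenomenon, not a presentational defect. Consider the nonregular language $L=\{a^ibw : w\in\{a,b\}^*,\ |w|=2^i\}$. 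At every level of the form $m=2^k+1$ (with $k$ large) one can check that both of your budgets stabilize simultaneously: every string of length $m+1$ is $\equiv_{m+1}$-equivalent to a string of length at most $m$, and $\equiv_{m+1}$ induces the same partition of $\Sigma^{\le m}$ as $\equiv_m$. So the combined (reachability-plus-distinguishing) refinement stabilizes completely at a level where both budgets exceed any prescribed cap, yet $L$ is not regular and new $m$-futures reappear at later levels: here $H(m)=2^{\lfloor\log_2 m\rfloor}+O(\log m)$, which dips to roughly $m/2$ just below each power of two and exceeds $m$ only at scales near powers of two. The same example shows that the scales $n$ at which $N_L(n)\ge n/2+1$ holds must be chosen adaptively, by a global counting argument over all levels, rather than by a level-by-level fixpoint argument; supplying that argument is precisely the nontrivial content of \cite{KF90}, and it is the piece missing from your proposal.
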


In the rest of the proof, Dwork and Stockmeyer \cite{DS90} develop a technique for constructing a Markov chain $P_{A,xy}$ with $2c$ states that models the computation of a given 2pfa $A$ with $c$ states on the concatenated string $xy$, where $x$ and $y$ are given strings. State
1 of the Markov chain corresponds to $M$ being at the beginning of its computation on the last symbol of $\cent x$. (Note that every 2pfa can be modified to start here, without changing the recognized language. The nonhalting states of the modified 2pfa are $\{q_{1},q_{2},\ldots,q_{c-1}\}$.) For $ 1\leq j \leq c-1 $, state $ j $ of the Markov chain corresponds to $M$ being in the configuration with the machine in state $ q_{j} $ and the head on the last symbol of $\cent x$, and state $ c+j-1 $ corresponds to $M$ being in the configuration with the machine in state $ q_{j} $ and the head on the first symbol of $ y \cent$. State $2c-1$ corresponds  to a disjunction of rejection, infinite loop with the head never leaving the region $\cent x$, and infinite loop within the region $y \cent$. State $2c$ corresponds to acceptance. The probability that $P_{A,xy}$ is absorbed in state $2c$ when started in state 1 equals the probability that $A$ accepts $xy$. 

The proof then considers any 2pfa  $M$ that recognizes language $L$ in expected time $T(n)$, and proceeds to establish a lower bound, in terms of $N_{L}(n)$, on $T(n)$. This is accomplished  by  showing that, for sufficiently large values of $n$, if the desired lower bound does not exist, then there must be two  pairwise $\nsim_{L,n}$ strings $w$ and $w'$, with distinguishing string $v$, such that the  Markov chains $P_{M,wv}$ and $P_{M,w'v}$ are  ``too close'' according to a notion of closeness defined in \cite{DS90}. 
In a step crucial  for our purposes, (\cite{DS90}, page 1015, Lemma 4.2,) it is proven that, if $P_{M,wv}$ and $P_{M,w'v}$ are so close, and if it is guaranteed that both Markov chains are absorbed to state $2c-1$ or $2c$ with total probability 1 within expected time $T(n)$, then the acceptance probabilities of $wv$ and $w'v$ must be so close that they must  both be members (or non-members) of $L$, contradicting their $n$-dissimilarity, thereby establishing the desired bound on $T(n)$. Fact \ref{fact:ds90} is then obtained by combining this result with Fact \ref{fact:ndiv2}.

In the case of our machines, the Markov chains produced according to the construction mentioned above do turn out to be that close to each other, but they do not necessarily satisfy the guarantee of absorption to state $2c-1$ or $2c$, since $ M $ can possibly enter an infinite loop in which the head shuttles back and forth over the $\cent w$ (resp., $\cent w'$) and $v \cent$ regions.
Fortunately, this problem goes away on a careful look: Consider all cycles of transitions that have probability 1 between the  two regions in the produced Markov chains. The set of states appearing in such a cycle is an absorbing class. Furthermore, it is certain that absorption to either state $2c-1$, or $2c$, or one of these loop classes will take place within expected time $n b(n)$, and the acceptance probability would not change if one redirected these transitions to state $ 2c-1 $, so Lemma 4.2 of \cite{DS90} still applies, and we can conclude with the same reasoning as in the proof of Fact \ref{fact:ds90} that if $b(n)$ is polynomially bounded, then $L$ is regular.

\end{document}